\theoremstyle{plain}
\newtheorem{claim}[theorem]{Claim}
\newtheorem{observation}[theorem]{Observation}
\theoremstyle{nonumberplain}
\newtheorem{remark}{Remark}
\DeclareMathOperator{\polylog}{\mathrm{polylog}}
\DeclareMathOperator{\area}{\mathit{area}}
\let\eps\varepsilon
\newcommand{\sinr}{\mbox{\it sinr}\,}
\newcommand{\tsinr}{\mbox{\it s}\widetilde{\mbox{\it in}}\mbox{\it r}\,}
\newcommand{\intrf}{\mbox{\it intrf}\,}
\newcommand{\tintrf}{\mbox{\it i}\widetilde{\mbox{\it ntr}}\mbox{\it f}\,}
\newcommand{\nrg}{\mbox{\it nrg}\,}
\newcommand{\tnrg}{\widetilde{\mbox{\it nrg}}\,}
\newcommand{\dist}{\mbox{\it dist}\,}
\newcommand{\pa}{\mbox{\it PA}\,}
\newcommand{\ps}{\mbox{\it PS}\,}
\def\RR{\mathbb{R}}
\definecolor{darkblue}{rgb}{0,0,0.6}
\def\boris#1{\textcolor{darkblue}{\textsc{Boris says}: \textsf{#1}}}
\title{Resolving SINR Queries in a Dynamic Setting\thanks{An earlier version of this paper (excluding \cref{sec:extensions} and \cref{sec:ultimate} and some of the proofs) was presented at ICALP'18~\cite{AronovBK18}. Work on this paper was initiated at the Fifth Workshop on Geometry and Graphs,
Bellairs Research Institute, Barbados, 2017. \funding{Boris Aronov and Matthew Katz were supported by a joint grant 2014/170 from the US-Israel Binational Science Foundation.}}}
\author{%
  Boris Aronov\thanks{%
    Department of Computer Science and Engineering,
    Tandon School of Engineering, New York University, Brooklyn, NY~11201,
    USA; \email{boris.aronov@nyu.edu}.
    \funding{Work on this paper by Boris Aronov was supported
      by NSF Grants CCF-11-17336, CCF-12-18791, and CCF-15-40656.%
    }}
  \and
  Gali Bar-On\thanks{%
    Department of Computer Science, Ben-Gurion University of the Negev,
    Beer-Sheva 84105, Israel;
    \email{galibar@post.bgu.ac.il}.}
  \and
  Matthew J. Katz\thanks{%
    Department of Computer Science, Ben-Gurion University of the Negev,
    Beer-Sheva 84105, Israel;
    \email{matya@cs.bgu.ac.il}.
    \funding{Work on this paper by Matthew Katz was supported
      by grant 1884/16 from the Israel Science Foundation.}}}
\begin{document}
\maketitle
\begin{keywords}
  Computational geometry; wireless networks; SINR; dynamic data structures; interference cancellation; range searching
\end{keywords}

\headers{Resolving SINR Queries in a Dynamic Setting}{Aronov, Bar-On, and Katz}

\begin{abstract}
  We consider a set of transmitters broadcasting simultaneously on the same frequency under the SINR model.  Transmission power may vary from one transmitter to another, and a transmitter's signal strength at a given point
  is modeled by the transmitter's power divided by some constant power $\alpha$ of the distance it traveled.  Roughly, a receiver at a given location can hear a specific transmitter only if the transmitter's signal is stronger by a specified ratio than the signals of all other transmitters combined.  An SINR query is to determine whether a receiver at a given location can hear any transmitter, and if yes, which one.

  An \emph{approximate} answer to an SINR query is such that one gets a definite \textsc{yes} or definite \textsc{no}, when the ratio between the strongest signal and all other signals combined is well above or well below the reception threshold, while the answer in the intermediate range is allowed to be either \textsc{yes} or \textsc{no}.

  We describe compact data structures that support approximate SINR queries in the plane in a dynamic context, i.e., where transmitters may be inserted and deleted over time. We distinguish between two main variants --- uniform power and non-uniform power. In both variants the preprocessing time is $O(n\,\polylog n)$ and the amortized update time is $O(\polylog n)$, while the query time is $O(\polylog n)$ for uniform power, and randomized time $O(\sqrt{n}\,\polylog n)$ with high probability for non-uniform power.	
	
	Finally, we observe that in the static context the latter data structure can be implemented differently, so that the query time is also $O(\polylog n)$, thus significantly improving all previous results for this problem.
\end{abstract}

\section{Introduction}
\label{sec:intro}

The \emph{Signal to Interference plus Noise Ratio (SINR) model} attempts to predict whether a wireless transmission is received successfully, in a setting consisting of multiple simultaneous transmitters in the presence of background noise.
Let $S=\{s_1, \ldots, s_n\}$ be a set of $n$ transmitters (distinct points in the plane), and let $p_i$ denote the transmission power of $s_i$, for $i=1,\ldots,n$. Let $q$ be a receiver (a point in the plane). According to the SINR model, $q$ \emph{receives} $s_i$ if and only if
\[
  \sinr(q,s_i) \coloneqq \frac{\frac{p_i}{|qs_i|^\alpha}}{\sum_{j \ne i}\frac{p_j}{|qs_j|^\alpha}+N} \ge \beta\,,
\]
where $\alpha \ge 1$ and $\beta > 1$ are constants, $N$ is a constant representing the background noise, and $|ab|$ is the Euclidean distance between points $a$ and $b$.

Observe that, since $\beta > 1$, $q$ may receive at most one transmitter ---  the one ``closest'' to it, namely, the one for which the value $\frac{p_i}{|qs_i|^\alpha}$ is maximum, or, equivalently, %
$\frac{1}{p_i^{1/\alpha}}|qs_i|$ is minimum.
Thus, one can partition the plane into $n$ not necessarily connected reception regions~$R_i$, one per transmitter in $S$, plus an additional region~$R_\emptyset$ consisting of all points where none of the transmitters is received. This partition is called the \emph{SINR diagram} of~$S$ \cite{aeklpr-sdciawn-12}.

In their seminal paper, Avin et al.~\cite{aeklpr-sdciawn-12} studied properties of SINR diagrams, focusing on the \emph{uniform power} version where $p_1=p_2=\cdots=p_n$. Their main result is that in this version the reception regions~$R_i$ are convex and fat. In the \emph{non-uniform power} version, on the other hand, the reception regions are not necessarily connected, and their connected components are not necessarily convex or fat.  In fact, they may contain holes~\cite{klpp-twn-11}.

An \emph{SINR query} is: Given a receiver $q$, find the sole transmitter $s$ that may be received by $q$ and determine whether it is indeed received by $q$, i.e., whether or not $\sinr(q,s) \ge \beta$.
A~natural question is: How quickly can one answer an SINR query, following a preprocessing stage in which data structures of total size nearly linear in $n$ are constructed? However, it~seems unlikely that the answer is significantly sub-linear (as the degree of the polynomials describing region boundaries is high), so the research has focused on preprocessing to facilitate efficient \emph{approximate} SINR queries.

\begin{table}
  \begin{center}

    \extrarowheight 2pt
    \renewcommand{\arraystretch}{1.6}
	\begin{tabular}{| l |l | l | l | }
		    \hline
	    \textbf{Power} & \textbf{Preprocessing} & \textbf{Space} & \textbf{Query} \\ \hline
	Uniform \cite{aeklpr-sdciawn-12} &	$O(\dfrac{n^2}{\eps})$ ($O(\dfrac{n}{\eps^{2.5}}\log^4 n \log\log n)$ \cite{AK-TAlg}) & $O(\dfrac{n}{\eps})$ & $O(\log n)$  \\*[1.1ex] \hline
	Non-Uniform \cite{klpp-twn-11} & $O(\dfrac{\varphi'}{\eps^2} n^2)$ & $O(\dfrac{\varphi'n}{\eps^2})$ & $O(\dfrac{\varphi}{\eps}\log n )$ \\*[1.3ex] \hline

  	\end{tabular}
		\end{center}
  \caption{Approximate SINR queries in a static setting --- previous results; $\varphi$ is an upper bound on the fatness parameters of the reception regions and $\varphi' \ge \varphi^2$.}
	\label{table:previous_results}
\end{table}

The approach of such research has been to construct a data structure which approximates the underlying SINR diagram, and use it for answering approximate SINR queries, by performing point-location queries in this structure. That is, given a query point $q$, first find the sole candidate $s_i$ that may be received at $q$ (say, by searching in the appropriate Voronoi diagram), and then perform a point-location query to approximately determine whether $q$ is in $R_i$ or not.
Two different notions of approximation have been used. In the first~\cite{aeklpr-sdciawn-12}, it~is guaranteed that the uncertain answer is only given infrequently, namely, the area of the uncertain region associated with $R_i$ is at most $\eps \cdot \area (R_i)$, for a prespecified parameter $\eps>0$. In the second~\cite{klpp-twn-11}, it is guaranteed that for every point in the uncertain region the SIN ratio is within an $\eps$-neighborhood of $\beta$.
See \cref{table:previous_results} for a summary of previous results; see also \cite{KantorLPP15} for related work. In addition, Aronov and Katz~\cite{AK-TAlg} obtained several results for \emph{batched} approximate SINR queries, using the latter notion of approximation; for example, one can perform $n$ simultaneous approximate queries in a network with $n$ transmitters at polylogarithmic amortized cost per query.

Given $\eps > 0$,\footnote{For simplicity of presentation, we will assume hereafter that $n>1/\eps$.}
 an \emph{approximate SINR query} is: Given a receiver $q$, find the sole transmitter~$s$ that may be received by $q$ and return a value $\tsinr(q,s)$, such that $(1 - \eps)\sinr(q,s) \le \tsinr(q,s) \le (1 + \eps)\sinr(q,s)$. Thus, unless $(1-\eps)\beta \le \tsinr(q,s) < (1+\eps)\beta$, the value $\tsinr(q,s)$ enables us to determine definitely whether or not $s$ is received by $q$.

In this paper, we devise efficient algorithms for handling dynamic approximate SINR queries. That is, given $S$, $\alpha, \beta$, and $N$, as above, and $\eps > 0$, we describe algorithms for answering approximate SINR queries after some initial preprocessing, in a setting where transmitters may be added to or deleted from $S$. We analyze our algorithms by the usual measures, namely, data structure size and preprocessing, query, and update times.

To the best of our knowledge, these are the first data structures to support dynamic approximate SINR queries.  In contrast with previous work on approximate SINR queries, our algorithms do not compute an approximation of the underlying SINR diagram.
We distinguish between two main variants of the problem --- the uniform power version and the non-uniform one. The preprocessing time in both cases is $O(n\,\polylog n)$, while the query and update time is $O(\polylog n)$ for the uniform version, and $O(\sqrt{n}\,\polylog n)$ for the non-uniform version. Thus, our solution for the \emph{dynamic} uniform version is comparable to the best known solutions for the \emph{static} uniform version. For the non-uniform version, our solution is the first one with bounds that depend only on $n$ and $\eps$ and not on other parameters of the input, both in the static and dynamic settings.

Moreover, for the non-uniform version in the static setting, we present the first algorithm for handling approximate SINR queries in $O(\polylog n)$ time, after $O(n\,\polylog n)$-time preprocessing. The algorithm is similar to its dynamic counterpart, however, the demanding stages of the latter algorithm can be implemented more efficiently in the static setting (with obvious changes to the data structure).

In addition to the obvious motivation for devising algorithms for dynamic approximate SINR queries, we mention another important application of our results. \emph{Successive Interference Cancellation (SIC)} is a technique that enables (in some circumstances) a~receiver~$q$ to receive a specific transmitter~$t$, even if $t$ cannot be received at $q$ in SINR sense.  Informally, our results support SIC; if $t$'s signal is the $k$th strongest at  $q$, then, through a sequence of $O(k)$ queries and updates, we can determine whether $q$ can decode $t$'s signal from the combined signal.
In contrast, Avin et al.~\cite{AvinCHKLPP17} construct a uniform-power static data structure of size $O(\eps^{-1}n^{10})$ which enables one to determine in $O(\log n)$ time whether $t$ can be received by $q$ using SIC. Their result is not directly comparable to ours, however, they guarantee logarithmic query regardless of the number of transmitters that need to be canceled before $t$ can be heard, and their approximation model is quite different from ours. See remark in \cref{sec:sic}, in which we argue that in practice $k$ does not exceed $O(\log n)$.

\paragraph*{Our results and organization}

In \cref{sec:uniform}, we consider the uniform power variant of the problem, that is we assume that all the transmitters have the same transmission power. We describe a data structure of size $O(n\,\polylog n)$ that supports approximate SINR queries in $O(\polylog n)$ time and updates in $O(\polylog n)$ amortized time.\footnote{We are ignoring here the dependency on the approximation factor $\eps$.} In \cref{sec:extensions}, we obtain a similar result for the common case where the ratio between the maximum power and the minimum power is bounded by a constant, or the number of distinct transmission powers is bounded by a constant. In \cref{sec:non-uniform}, we consider the non-uniform power variant, i.e., we assume arbitrary-power transmitters. For this variant, we describe a data structure of size $O(n\,\polylog n)$ that can answer approximate SINR queries in randomized time $O(\sqrt{n}\,\polylog n)$ with high probability and can perform updates in amortized $O(\polylog n)$ time. Our dynamic data structures can be used to determine whether a receiver $q$ can receive a specific transmitter $t$ through successive interference cancellation, which is the topic of \cref{sec:sic}. Finally, in \cref{sec:ultimate}, we consider the non-uniform power variant in a static setting. We observe that the costly stages in the query algorithm in a dynamic setting can be implemented more efficiently (with obvious changes to the data structure) when the set of transmitters is fixed. We thus obtain a data structure of expected size $O(n\,\polylog n)$ supporting approximate SINR queries in expected $O(\polylog n)$ time; see \cref{table:previous_results} for the previous bounds for this problem. We note that the construction time of all our data structures is $O(n\,\polylog n)$.

\section{Uniform power}
\label{sec:uniform}

We first discuss the slightly easier case of uniform power.
Let $q$ be a receiver and let $s$ be the closest transmitter to $q$. Set $\intrf(q)=\sum_{s' \in S\setminus\{s\}} \frac{1}{|q s'|^\alpha}$, then $\sinr(q,s)=\frac {\frac{1}{|qs|^\alpha}} {\intrf(q)}$.\footnote{For clarity of presentation, we assume hereafter that there is no noise, i.e., $N=0$. Our algorithms extend to the situation where noise is present in a straightforward manner.}
When $s$ is the transmitter closest to~$q$, we will simply write $\sinr(q)$ instead of $\sinr(q,s)$.
Fix $\eps > 0$.  We wish to compute a value $\tsinr(q)$ satisfying $(1-\eps)\sinr(q) \leq \tsinr(q) \leq \sinr(q)$. We show below how to compute a value $\tintrf(q)$ such that $\intrf(q) \leq \tintrf(q) \leq (1+\eps)\intrf(q)$ and then simply set $\tsinr(q)=\frac{\frac{1}{|qs|^\alpha}}{\tintrf(q)}$. Clearly, we have

\begin{claim}
\label{eps_and_delta}
Under the above assumption,
$(1-\eps)\sinr(q) < \tsinr(q) \leq \sinr(q)$.
\end{claim}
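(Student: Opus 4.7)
The plan is to argue by direct algebraic manipulation of the two defining identities $\sinr(q) = (1/|qs|^\alpha)/\intrf(q)$ and $\tsinr(q) = (1/|qs|^\alpha)/\tintrf(q)$, using the sandwich inequality $\intrf(q) \le \tintrf(q) \le (1+\eps)\intrf(q)$ that is assumed.

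First I would establish the upper bound $\tsinr(q) \le \sinr(q)$. Since $\tintrf(q) \ge \intrf(q) > 0$, taking reciprocals reverses the inequality, and multiplying through by the common positive numerator $1/|qs|^\alpha$ yields $\tsinr(q) \le \sinr(q)$ immediately.

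Next I would handle the lower bound. From $\tintrf(q) \le (1+\eps)\intrf(q)$, taking reciprocals gives $1/\tintrf(q) \ge 1/((1+\eps)\intrf(q))$, and multiplying by $1/|qs|^\alpha$ yields $\tsinr(q) \ge \sinr(q)/(1+\eps)$. The final step is the elementary inequality $1/(1+\eps) > 1 - \eps$ for $\eps > 0$, which follows from $(1-\eps)(1+\eps) = 1 - \eps^2 < 1$. Combining gives $\tsinr(q) > (1-\eps)\sinr(q)$, which is the strict inequality claimed.

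There is no real obstacle here — the entire statement is a two-line reciprocal manipulation plus the standard estimate $1/(1+\eps) > 1-\eps$. The only thing to be slightly careful about is that $\intrf(q) > 0$ (so that the reciprocal is well defined and the inequalities flip correctly), which holds as soon as $|S| \ge 2$; the degenerate case $|S| = 1$ can be dismissed as there is no interference and the approximation is trivial.
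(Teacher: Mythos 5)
Your proposal is correct and follows exactly the reasoning the paper has in mind: the paper merely says ``Clearly, we have'' and omits the proof, but the sandwich bound on $\tintrf(q)$, reciprocation, and the elementary estimate $1/(1+\eps) > 1-\eps$ are precisely the intended steps. Nothing is missing, and your remark about $\intrf(q) > 0$ is a harmless bit of extra care.
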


We start with a slower but easier to describe solution and then refine it.

\subsection{Annuli}
\label{annuli}
Let $\eps > 0$.
Let $q$ be a receiver and let $s \in S$ be the closest transmitter to $q$. Let $s_1,\ldots,s_{n-1}$ be the transmitters in $S \setminus \{s\}$, and assume without loss of generality that $s_1$ is the second closest transmitter to $q$, among all the transmitters in $S$, breaking ties arbitrarily. Recall that $\intrf(q) = \sum_{i=1}^{n-1} \frac{1}{|q s_i|^\alpha}$ and that we wish to compute a value $\tintrf(q)$ such that $\intrf(q) \le \tintrf(q) \le (1+\eps)\intrf(q)$.

We will need the following simple observation.
\begin{observation}
\label{obs:intrf}
$\intrf(q)$ is the sum of $n-1$ positive terms of which $\frac{1}{|qs_1|^\alpha}$ is the largest, so
$\frac{1}{|qs_1|^\alpha} \le \intrf(q) \le \frac{n-1}{|qs_1|^\alpha} < \frac{n}{|qs_1|^\alpha}$.
\end{observation}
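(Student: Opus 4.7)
The plan is to unpack the definitions directly; the observation is essentially a restatement of the fact that $s_1$ realizes the smallest distance to $q$ among the interfering transmitters. First I would note that by the assumption on the ordering, $s_1$ is the closest point to $q$ in $S \setminus \{s\}$, so $|qs_1| \le |qs_i|$ for every $i = 1,\ldots,n-1$. Since $\alpha \ge 1$, taking $\alpha$-th powers and then reciprocals preserves/reverses the inequality appropriately, yielding $\tfrac{1}{|qs_i|^\alpha} \le \tfrac{1}{|qs_1|^\alpha}$ for all $i$.

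From here both bounds follow by summing. The sum $\intrf(q) = \sum_{i=1}^{n-1} \tfrac{1}{|qs_i|^\alpha}$ has $n-1$ strictly positive terms, one of which is $\tfrac{1}{|qs_1|^\alpha}$; dropping the other $n-2$ nonnegative terms gives the lower bound $\intrf(q) \ge \tfrac{1}{|qs_1|^\alpha}$. For the upper bound, I would replace each of the $n-1$ summands by the maximum term $\tfrac{1}{|qs_1|^\alpha}$, obtaining $\intrf(q) \le (n-1)\cdot\tfrac{1}{|qs_1|^\alpha}$. The final strict inequality $(n-1)/|qs_1|^\alpha < n/|qs_1|^\alpha$ is immediate since $|qs_1|^\alpha>0$.

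There is no real obstacle here; the only thing to be slightly careful about is the implicit assumption that $n \ge 2$ (so that $s_1$ actually exists), which is fine in the context of the paper, and the convention that an empty interference sum is treated as $0$ is irrelevant since we are in the case $n-1 \ge 1$.
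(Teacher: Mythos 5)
Your argument is correct and is exactly the reasoning the paper leaves implicit in stating this as an observation: $s_1$ realizes the minimum distance among the interferers, so its term dominates, and the bounds follow by keeping one term (lower bound) or replacing all $n-1$ terms by the maximum (upper bound). No discrepancy with the paper's intent.
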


\subsubsection{Query algorithm}
\label{sec:alg_desc}

Let $q$ be a query point. First, we find $s$ and $s_1$, the closest and the second closest transmitters to $q$, respectively. Next, we divide the transmitters in $S \setminus \{s\}$ into two subsets, $S_c$ and $S_f$, where $S_c$ consists of all transmitters that are `close' to $q$ and $S_f$ consists of all transmitters that are `far' from $q$. More precisely, set $r = (\frac{2n}{\eps})^{1/\alpha} \cdot |qs_1|$, then $S_c$ consists of all the transmitters in $S \setminus \{s\}$ whose distance from $q$ is less than $r$, and $S_f$ consists of all the remaining transmitters.
We now approximate the contribution of each of these subsets to $\intrf(q)$.

The contribution of a transmitter $s_i$ in $S_f$ to the sum $\intrf(q)$ is
\[
\frac{1}{|qs_i|^\alpha} \le \frac{1}{r^\alpha} = \frac{\eps}{2n|qs_1|^\alpha}\,,
\]
and the combined contribution of the transmitters in $S_f$ is at most $|S_f| \cdot \frac{\eps}{2n|qs_1|^\alpha} \le \frac{\eps}{2|qs_1|^\alpha}$.

\begin{figure}[h]
    \centering
        \includegraphics[width=0.4\textwidth]{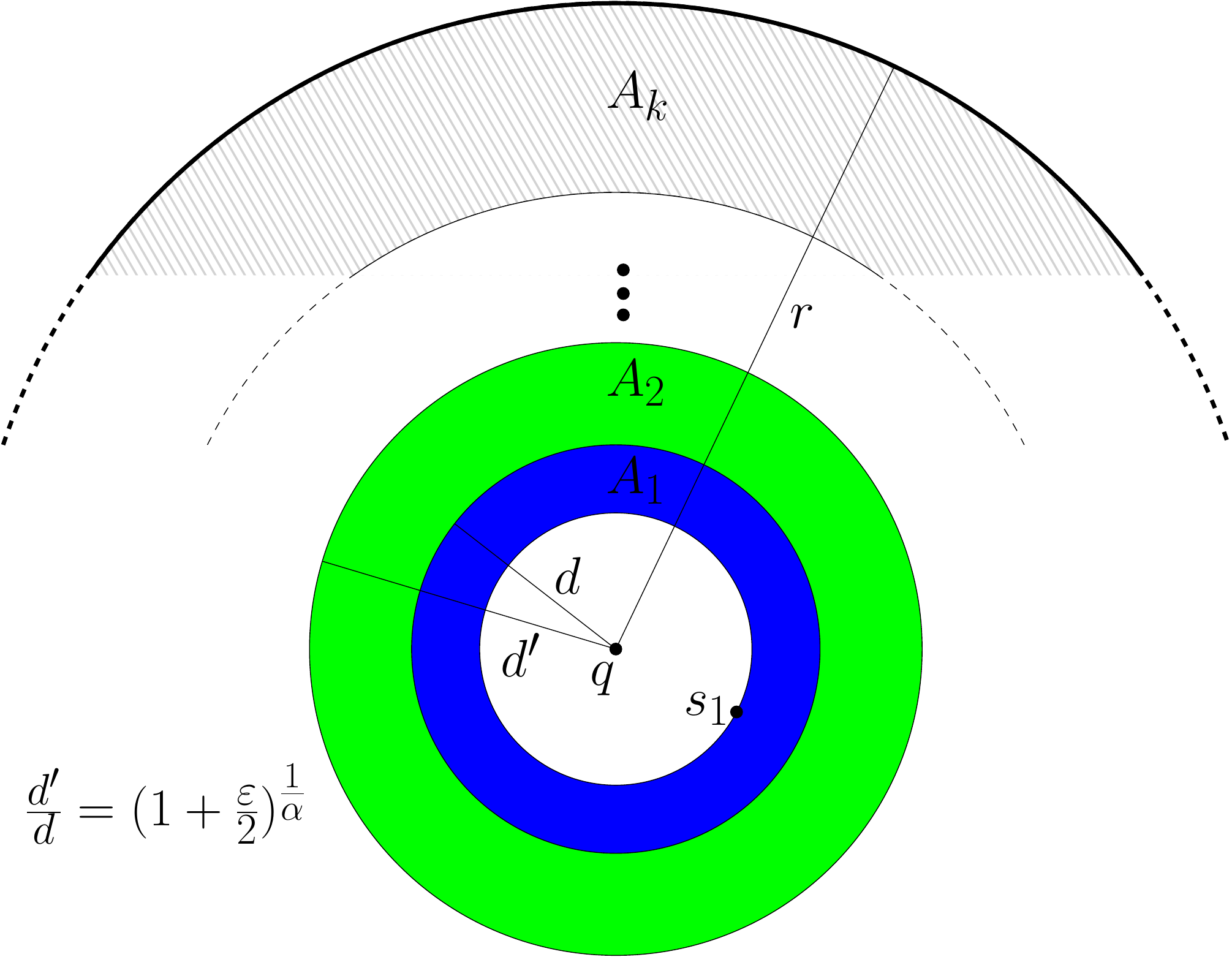}.
    \caption{Partitioning $A_q(|qs_1|,r)$ into annuli.}
    \label{annuli_partition}
\end{figure}

We denote the annulus centered at $q$ with inner radius $r_1$ and outer radius $r_2$ by $A_q(r_1,r_2)$.
In order to approximate the overall contribution of the transmitters in $S_c$, we partition the annulus $A_q(|qs_1|,r)$ into $k$ semi-open annuli, $A_1,\ldots,A_k$, such that the ratio of the outer to the inner radius of $A_j$ is $(1+\frac{\eps}{2})^\frac{1}{\alpha}$ (except for $A_k$ whose corresponding ratio is at most $(1+\frac{\eps}{2})^\frac{1}{\alpha}$); see \cref{annuli_partition}. By semi-open we mean that the inner circle of $A_j$ is contained in $A_j$, but the outer circle is not. Now, for each $A_j$, we approximate the contribution of each transmitter $s_i \in S_c \cap A_j$ by $\frac{1}{d^\alpha}$, where $d$ is the inner radius of $A_j$, that is, we approximate the contribution of $s_i$ by moving it to the inner circle of $A_j$. We prove below (\cref{cor:approx_S_c_contribution}) that this yields a $(1 + \frac{\eps}{2})$-approximation of the overall contribution of the transmitters in $S_c$ to $\intrf(q)$.

\begin{lemma}
\label{lem:moving_s_i}
Let $s_i \in S_c$ and let $A=A_q(d,d')$ be the annulus to which $s_i$ belongs.
Then, by moving $s_i$ to the inner circle of $A$, one obtains a $(1+\frac{\eps}{2})$-approximation of the contribution of $s_i$ to $\intrf(q)$.
\end {lemma}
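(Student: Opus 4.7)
The plan is to prove this by a direct two-line calculation straight from the definition of the annulus $A = A_q(d,d')$, using only the fact that $d'/d \le (1+\eps/2)^{1/\alpha}$ and that $s_i$ lies in the semi-open annulus.

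First I would unpack the statement. The contribution of $s_i$ to $\intrf(q)$ is $1/|qs_i|^\alpha$, and after moving $s_i$ to the inner circle of $A$ this becomes $1/d^\alpha$. So the lemma asks me to show
\[
\frac{1}{|qs_i|^\alpha} \;\le\; \frac{1}{d^\alpha} \;\le\; \Bigl(1+\tfrac{\eps}{2}\Bigr)\frac{1}{|qs_i|^\alpha}.
\]
Since $s_i \in A_q(d,d')$ and the annulus is semi-open with its inner circle included, we have $d \le |qs_i| < d'$.

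The left inequality is immediate from $|qs_i| \ge d$: raising to the power $\alpha \ge 1$ preserves the ordering of positive reals, and then reciprocals reverse it. For the right inequality I would simply compute the ratio
\[
\frac{1/d^\alpha}{1/|qs_i|^\alpha} \;=\; \Bigl(\frac{|qs_i|}{d}\Bigr)^{\!\alpha} \;<\; \Bigl(\frac{d'}{d}\Bigr)^{\!\alpha} \;\le\; 1+\tfrac{\eps}{2},
\]
where the last step uses the construction of the partition, which guarantees $d'/d \le (1+\eps/2)^{1/\alpha}$ for every piece $A_j$ (including $A_k$, whose ratio was chosen to be at most this value).

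There is really no substantive obstacle here: the proof is a one-line consequence of how the annuli were sized. The only minor care is to remember why both endpoint cases behave correctly — on the inner side, the semi-openness of $A$ permits $|qs_i| = d$ (giving an exact value, not just an approximation); on the outer side, the strict inequality $|qs_i| < d'$ keeps the ratio below $(d'/d)^\alpha$. Later, summing this pointwise $(1+\eps/2)$-approximation over all $s_i \in S_c$ will yield the corollary about the aggregate contribution of $S_c$, but that is the job of the next step, not this lemma.
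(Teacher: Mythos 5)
Your proof is correct and is essentially identical to the paper's: both bound $|qs_i|$ between $d$ and $d'$, compute the ratio $(|qs_i|/d)^\alpha < (d'/d)^\alpha$, and invoke the construction guarantee $d'/d \le (1+\eps/2)^{1/\alpha}$. The only difference is presentational — you also spell out the trivial lower bound and the endpoint behavior — but the substance matches the paper's one-line calculation.
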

\begin{proof}
Since $s_i \in A$, $d \le |qs_i| < d'$. Moreover, by construction, $d'/d \le (1+\frac{\eps}{2})^\frac{1}{\alpha}$. So, the ratio of our approximation to the real contribution of $s_i$ is
\[
  \frac{1/d^\alpha}{1/|qs_i|^\alpha} = \frac{|qs_i|^\alpha}{d^\alpha} < (\frac{d'}{d})^\alpha \le 1+\frac{\eps}{2}\,.
\]
\end{proof}
\begin{corollary}
\label{cor:approx_S_c_contribution}
By doing this for each transmitter in $S_c$, one obtains a $(1 + \frac{\eps}{2})$-approximation of the overall contribution of the transmitters in $S_c$ to $\intrf(q)$.
\end{corollary}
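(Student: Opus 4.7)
The plan is to derive the corollary by summing the per-transmitter bound from \cref{lem:moving_s_i} across all of $S_c$. Write $C(q) \coloneqq \sum_{s_i \in S_c} \frac{1}{|qs_i|^\alpha}$ for the true contribution of $S_c$ to $\intrf(q)$, and $\widetilde{C}(q) \coloneqq \sum_{s_i \in S_c} \frac{1}{d(s_i)^\alpha}$ for the approximation, where $d(s_i)$ is the inner radius of the annulus $A_j$ containing $s_i$.

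First I would observe that \cref{lem:moving_s_i} actually gives a \emph{two-sided} bound on each term: since $s_i$ lies in the annulus $A_q(d(s_i), d'(s_i))$ we have $d(s_i) \le |qs_i|$, so $\frac{1}{|qs_i|^\alpha} \le \frac{1}{d(s_i)^\alpha}$, while the upper bound proved in the lemma gives $\frac{1}{d(s_i)^\alpha} \le \bigl(1 + \tfrac{\eps}{2}\bigr)\frac{1}{|qs_i|^\alpha}$. Thus every summand of $\widetilde{C}(q)$ sandwiches the corresponding summand of $C(q)$ between the correct value and $(1+\tfrac{\eps}{2})$ times it.

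Next, summing these inequalities term-by-term over all $s_i \in S_c$ — which is valid because all terms are positive — yields
\[
C(q) \;\le\; \widetilde{C}(q) \;\le\; \bigl(1 + \tfrac{\eps}{2}\bigr)\, C(q),
\]
which is exactly the claim. There is no real obstacle here; the only point worth being careful about is to verify that the semi-open convention on the annuli (inner circle included, outer circle excluded) ensures that each $s_i \in S_c$ lies in exactly one $A_j$, so no transmitter is double-counted or omitted in forming $\widetilde{C}(q)$. Once that bookkeeping is checked, the corollary follows immediately from \cref{lem:moving_s_i} by linearity of the sum.
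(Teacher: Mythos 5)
Your proof is correct and matches what the paper intends: the corollary follows from \cref{lem:moving_s_i} simply by summing the two-sided per-transmitter bounds over $S_c$, using positivity of the terms. The paper leaves this step implicit (it states the corollary with no separate proof), so your explicit write-up — including the check that the semi-open annuli partition $S_c$ without overlap — is the same argument, just spelled out.
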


It remains to show that $\tintrf(q)$, which is the sum of the approximations for $S_f$ and for $S_c$, satisfies the requirements, i.e., that $\intrf(q) \le \tintrf(q) \le (1 + \eps)\intrf(q)$.
From the description above it is clear that $\tintrf(q) \ge \intrf(q)$, so we only need to show that $\tintrf(q) \le (1 + \eps)\intrf(q)$. Indeed,
\[
\tintrf(q) \le \frac{\eps}{2|qs_1|^\alpha} + (1 + \frac{\eps}{2}) \sum_{s_i \in S_C} \frac{1}{|qs_i|^\alpha} \le
\frac{\eps}{2}\intrf(q) + (1 + \frac{\eps}{2})\intrf(q) = (1 + \eps)\intrf(q)\,,
\]
where the second inequality is based on \cref{obs:intrf}.

\subsubsection{Implementation}
\label{sec:impl_circular}
We first show that $k$, the number of annuli into which the annulus $A_q(|qs_1|,r)$ is partitioned, is small.

\begin{lemma}\label{lem:number_annuli}
  $k = O(\frac{1}{\eps}\log n)$.
\end{lemma}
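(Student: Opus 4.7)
The plan is to simply compute $k$ from the construction and then bound the resulting expression. By the definition of the partition, each annulus $A_j$ with $j<k$ has outer/inner radius ratio exactly $(1+\eps/2)^{1/\alpha}$, while $A_k$ has a ratio at most this. Hence, after $k$ annuli, the outermost radius equals at most $(1+\eps/2)^{k/\alpha}\cdot|qs_1|$, and this must be at least $r = (2n/\eps)^{1/\alpha}\cdot|qs_1|$ in order for the union to cover $A_q(|qs_1|,r)$. Conversely, $k$ is chosen as the smallest integer with this property, so
\[
k \;=\; \left\lceil \frac{\log(2n/\eps)}{\log(1+\eps/2)} \right\rceil.
\]

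Next I would bound the denominator from below using the elementary inequality $\log(1+x)\ge x/2$ valid for $x\in[0,1]$ (applied with $x=\eps/2$, which we may assume is at most $1$), giving $\log(1+\eps/2)\ge \eps/4$. Therefore
\[
k \;\le\; 1 + \frac{4\log(2n/\eps)}{\eps}.
\]

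Finally, the paper's standing assumption $n>1/\eps$ yields $2n/\eps < 2n^2$, so $\log(2n/\eps) = O(\log n)$, and we conclude $k = O(\tfrac{1}{\eps}\log n)$, as claimed.

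I do not expect any genuine obstacle here: the result is a direct geometric-series count, and the only subtlety is choosing the right elementary bound on $\log(1+\eps/2)$ and invoking the footnoted assumption $n>1/\eps$ to absorb $\log(1/\eps)$ into $\log n$. If one wanted to avoid the assumption on $\eps$ being small, one could instead use $\log(1+\eps/2)\ge \eps/(2+\eps)$ for all $\eps>0$, which gives the same $O(\log n/\eps)$ bound with a slightly different constant.
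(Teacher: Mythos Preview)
Your proof is correct and follows essentially the same approach as the paper: compute $k$ as $\lceil \log(2n/\eps)/\log(1+\eps/2)\rceil$, bound the denominator from below by a constant times $\eps$, and then invoke the assumption $n>1/\eps$ to absorb $\log(1/\eps)$ into $O(\log n)$. The only cosmetic difference is the particular elementary inequality used on $\log(1+\eps/2)$ (the paper uses $2^x\le 1+x$ for $x\in[0,1]$, yielding $\log_2(1+\eps/2)\ge \eps/2$), which affects only the hidden constant.
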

\begin{proof}
Clearly, $k = \lceil \log_{(1+\frac{\eps}{2})^\frac{1}{\alpha}} \frac{r}{|qs_1|} \rceil$. But,
\[
\log_{(1+\frac{\eps}{2})^\frac{1}{\alpha}} \frac{r}{|qs_1|} =
\log_{(1+\frac{\eps}{2})^\frac{1}{\alpha}} (\frac{2n}{\eps})^\frac{1}{\alpha} =
\frac{\log (\frac{2n}{\eps})^\frac{1}{\alpha} }{\log (1+\frac{\eps}{2})^\frac{1}{\alpha}} =
\frac{\log \frac{2n}{\eps}}{\log (1+\frac{\eps}{2})}\,,
\]
and since $2^x \le 1+x$, for $0 \le x \le 1$, we obtain
\[
  \log_{(1+\frac{\eps}{2})^\frac{1}{\alpha}} \frac{r}{|qs_1|} \le \frac{\log \frac{2n}{\eps}}{\frac{\eps}{2}} = O(\frac{1}{\eps}\log \frac{n}{\eps})\,.
\]
Now, since we are assuming that $n>1/\eps$, $\log \frac{n}{\eps} < 2 \log n = O(\log n)$ and $k=O(\frac{1}{\eps}\log n)$.
\end{proof}

In the preprocessing stage we compute the following data structures for the set of transmitters $S$.

\subparagraph*{Dynamic nearest neighbor}

A data structure due to Chan~\cite{Chan19} can be used for dynamic 2D nearest-neighbor queries.
A set of points in the plane can be maintained dynamically in a linear-size data structure, so as to support insertions, deletions, and nearest-neighbor queries. Each insertion takes $O(\log^2 n)$ amortized deterministic time, each deletion takes $O(\log^4 n)$ amortized deterministic time, and each query takes $O(\log^2 n)$ worst-case deterministic time, where $n$ is the size of the set of points at the time the operation is performed; see 
  also the data structure of Kaplan \textsl{et al.}~\cite{DBLP:conf/soda/KaplanMRSS17plus} with slightly worse performance.

\subparagraph*{Dynamic disk range counting}
We start with the construction of Matou\v{s}ek \cite{m-ept-92}: In linear space and $O(n\log n)$ time one can preprocess a set of $n$ points in $\mathbb{R}^d$ to support semi-group halfspace range queries in $O(n^{1-1/d}\polylog n)$ time. A point can be deleted in $O(\log n)$ amortized time and inserted in $O(\log^2n)$ amortized time.  Lifting circles to points in $\mathbb{R}^3$ in the standard manner, we obtain a linear-space, $O(n \log n)$ time, $O(n^{2/3}\polylog n)$ disk range counting query, $O(\log n)$ amortized delete, $O(\log^2 n)$ amortized insert data structure.  We do not attempt to optimize this ingredient, as we replace this infrastructure with a more efficient one in the following section.

\bigskip

Given a query point $q$, we find $s$ and $s_1$ (the closest and second closest transmitters) using the data structure for dynamic nearest neighbor; both the data structure of Chan~\cite{Chan19} and Kaplan \textsl{et al.}~\cite{DBLP:conf/soda/KaplanMRSS17plus} can be modified to return both the first and second nearest neighbors~\cite{ChanPC,MulzerPC}.
Next, we compute the distance $r$ and partition the annulus $A_q(|qs_1|,r)$ into $k$ annuli, as described above.
Now, we calculate $\tintrf(q)$ as follows. We first compute the size of the set $S_f$ by performing a disk counting query with the circle of radius $r$ centered at $q$ and subtracting the answer from $n-1$; we initialize $\tintrf(q)$ to $|S_f| \cdot \frac{\eps}{2n|qs_1|^\alpha}$.
Next, for each of the $k$ annuli, we compute the number $x$ of points of $S$ lying in it, as the difference in the numbers of points in the two disks defined by its bounding circles, obtained by counting queries.  We then increment $\tintrf(q)$ by $\frac{x}{d^\alpha}$, where $d$ is the radius of the inner circle of the current annulus.

An update is performed by updating the two underlying data structures.

We omit the detailed performance analysis of this version, as a better data structure is described next.

\subsection{Polygonal rings}
\label{sec:poly}

We now present a more efficient solution, which is similar to the previous one, except that we replace the circular annuli by polygonal rings. Set $x=(1+\frac{\eps}{2})^\frac{1}{\alpha}$, and consider any three circles $C_0,C_1,C_2$ centered at $q$, such that $r_1/r_0 = r_2/r_1 = \sqrt x$, where $r_i$ is the radius of $C_i$. %
Set $l = \bigl\lceil\frac{\pi}{\sqrt{2- \frac{2}{\sqrt{x}}}}\bigr\rceil$, and
let $B_i$ be the regular $l$-gon inscribed in $C_i$, so that one of its vertices lies on the upward vertical ray through $q$, for $i=1,2$. We now show that $C_{i-1}$ is contained in $B_i$, for $i=1,2$, and therefore, the polygonal ring defined by $B_1$ and $B_2$ is contained in the annulus $A_q(r_0,r_2)$.  

\begin{figure}[h]
    \centering
        \includegraphics[width=0.32\textwidth]{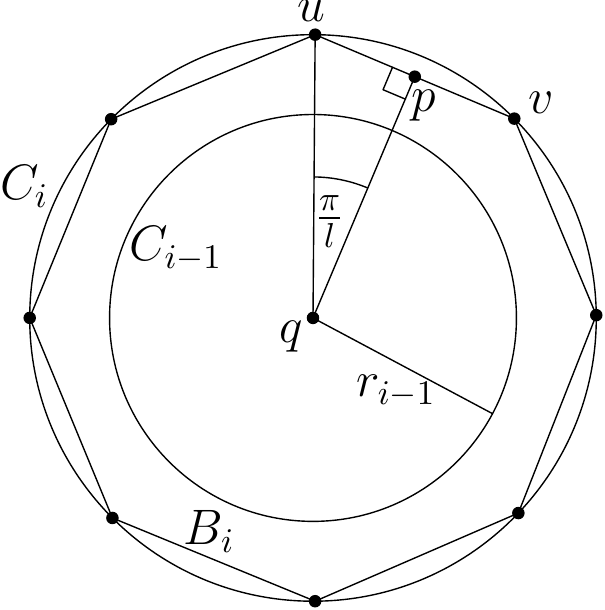}.
    \caption{Proof of \cref{cl:poly_is_included}. The circles $C_{i-1}$ and $C_i$ and the regular $l$-gon $B_i$.}
    \label{fig:poly3}
\end{figure}

We will need the following inequality: for $y\in[0,1]$ and $z \in (0,1]$,
\begin{equation}
  (1+y)^z \ge 1+(2^z-1)y.
  \label{eq:inequality}
\end{equation}
\begin{proof}
The assertion holds with equality when $y$ is $0$ or $1$. The right-hand side is a linear function of $y$ and the left-hand side is a concave function of $y$.
\end{proof}

\begin{claim}
\label{cl:number_edges}
$l=O(1/\sqrt{\eps})$.
\end{claim}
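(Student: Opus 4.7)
The plan is to show $2-2/\sqrt{x} = \Omega(\eps)$, which immediately yields $l = \lceil \pi/\sqrt{2-2/\sqrt{x}}\,\rceil = O(1/\sqrt{\eps})$. All of $\alpha$ and $\beta$ are treated as constants throughout.

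First I would rewrite
\[
2 - \frac{2}{\sqrt{x}} \;=\; 2\cdot\frac{\sqrt{x}-1}{\sqrt{x}},
\]
so it suffices to lower bound $\sqrt{x}-1$ by $\Omega(\eps)$ and to upper bound $\sqrt{x}$ by a constant. Since $x = (1+\eps/2)^{1/\alpha}$, we have $\sqrt{x} = (1+\eps/2)^{1/(2\alpha)}$. For the upper bound, assuming $\eps \le 1$ and $\alpha \ge 1$, trivially $\sqrt{x} \le (3/2)^{1/(2\alpha)} \le \sqrt{3/2}$, so $1/\sqrt{x}$ is bounded below by a positive constant.

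For the lower bound on $\sqrt{x}-1$, I would apply inequality~\eqref{eq:inequality} with $y = \eps/2 \in [0,1]$ and $z = 1/(2\alpha) \in (0,1]$ (valid since $\alpha \ge 1$). This gives
\[
\sqrt{x} \;=\; (1+\eps/2)^{1/(2\alpha)} \;\ge\; 1 + \bigl(2^{1/(2\alpha)}-1\bigr)\,\tfrac{\eps}{2},
\]
hence $\sqrt{x} - 1 \ge c_\alpha\,\eps$ where $c_\alpha = (2^{1/(2\alpha)}-1)/2 > 0$ is a constant depending only on $\alpha$.

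Combining the two estimates,
\[
2 - \frac{2}{\sqrt{x}} \;=\; 2\cdot\frac{\sqrt{x}-1}{\sqrt{x}} \;\ge\; \frac{2\,c_\alpha\,\eps}{\sqrt{3/2}} \;=\; \Omega(\eps),
\]
so $\sqrt{2 - 2/\sqrt{x}} = \Omega(\sqrt{\eps})$ and therefore $l \le \pi/\sqrt{2-2/\sqrt{x}} + 1 = O(1/\sqrt{\eps})$, as claimed. There is no real obstacle here; the only subtlety is verifying that the hypotheses of~\eqref{eq:inequality} are met (which forces the assumption $\alpha\ge1$ already used in the paper) and keeping track that the hidden constant depends on $\alpha$ but not on $\eps$ or $n$.
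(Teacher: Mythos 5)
Your proof is correct and follows essentially the same route as the paper: both invoke inequality~\eqref{eq:inequality} with $y=\eps/2$ and $z=1/(2\alpha)$ to obtain $\sqrt{x}-1\ge(2^{1/(2\alpha)}-1)\eps/2$, and then convert this into $2-2/\sqrt{x}=\Omega(\eps)$. The only cosmetic difference is that you bound the numerator $\sqrt{x}$ and denominator $\sqrt{x}-1$ separately (making the constant upper bound on $\sqrt{x}$ explicit via $\eps\le 1$), whereas the paper bounds the ratio $\sqrt{x}/(\sqrt{x}-1)$ in one step by monotonicity of $t\mapsto t/(t-1)$; these are equivalent.
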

\begin{proof}
Recall that $l = \bigl\lceil \frac{\pi}{\sqrt{2 - \frac{2}{\sqrt{x}}}} \bigr\rceil$ and $x=(1+\frac{\eps}{2})^\frac{1}{\alpha}$.  
Using \eqref{eq:inequality}, we deduce that $\sqrt x = (1+\eps/2)^{1/(2\alpha)} \ge 1 + (2^{1/(2\alpha)}-1)\eps/2$ and
\[
  \frac{1}{1-\frac{1}{\sqrt{x}}}=\frac{\sqrt{x}}{\sqrt{x}-1} \le \frac{1 + (2^{1/(2\alpha)}-1)\eps/2}{(2^{1/(2\alpha)}-1)\eps/2}=O(1/\eps).
\]
We conclude that $l=\bigl\lceil\frac{\pi}{\sqrt{2 - \frac{2}{\sqrt{x}}}} \bigr\rceil=O(1/\sqrt{\eps})$.
\end{proof}

\begin{claim}
\label{cl:poly_is_included}
$C_{i-1}$ is contained in $B_i$, for $i=1,2$.
\end{claim}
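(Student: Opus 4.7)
The plan is to reduce the containment to a condition on the apothem of $B_i$ and then use a standard quadratic lower bound for the cosine to check that the given value of $l$ is large enough.

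First I would recall that the regular $l$-gon $B_i$ inscribed in the circle $C_i$ of radius $r_i$ has inradius (apothem) equal to $r_i\cos(\pi/l)$. Since both $B_i$ and $C_{i-1}$ are centered at $q$, we have $C_{i-1}\subseteq B_i$ if and only if the apothem of $B_i$ is at least $r_{i-1}$, i.e.\
\[
  r_i\cos(\pi/l)\ \ge\ r_{i-1}.
\]
Dividing by $r_i$ and using $r_{i-1}/r_i=1/\sqrt{x}$, this is equivalent to $\cos(\pi/l)\ge 1/\sqrt{x}$.

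Next I would apply the elementary inequality $\cos\theta\ge 1-\theta^2/2$, valid for all real $\theta$, with $\theta=\pi/l$. It then suffices to prove
\[
  1-\frac{\pi^2}{2l^2}\ \ge\ \frac{1}{\sqrt{x}},
\]
which rearranges to $l\ge \pi/\sqrt{2-2/\sqrt{x}}$. Since $x=(1+\eps/2)^{1/\alpha}>1$, the quantity under the square root is positive, and the definition $l=\bigl\lceil \pi/\sqrt{2-2/\sqrt{x}}\bigr\rceil$ guarantees exactly this inequality. Hence $\cos(\pi/l)\ge 1/\sqrt{x}$, which yields $C_{i-1}\subseteq B_i$.

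The argument is the same for $i=1$ and $i=2$ since it uses only the ratio $r_{i-1}/r_i=1/\sqrt{x}$, which is common to both rings. There is no real obstacle here; the only delicate point is to invoke a cosine bound that is both tight enough to give the claimed $l$ and simple enough to match the algebraic form appearing in the definition of $l$, and $\cos\theta\ge 1-\theta^2/2$ does the job precisely.
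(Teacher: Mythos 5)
Your proof is correct and follows essentially the same route as the paper's: you reduce the containment to the condition that the apothem $r_i\cos(\pi/l)$ of $B_i$ is at least $r_{i-1}$, then verify $\cos(\pi/l)\ge 1/\sqrt{x}$ using the bound $\cos\theta\ge 1-\theta^2/2$ together with the definition of $l$. The paper derives the same apothem formula geometrically via the edge-midpoint $p$ with $|qp|=r_i\cos(\pi/l)$, but the substance of the argument is identical.
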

\begin{proof}
We need to prove that $|qp| \ge r_{i-1}$, where $p$ is the midpoint of an edge $uv$ of $B_i$, see \cref{fig:poly3}.
Clearly, $\angle qpu = \frac{\pi}{2}$ and $\angle uqp = \frac {\pi}{l}$. Therefore,
\begin{equation}
\label{eq:qp}
|qp| = r_i \cos \frac{\pi}{l} = \sqrt{x} r_{i-1} \cos \frac{\pi}{l}\,.
\end{equation}
Now, recall that $l \ge \frac{\pi}{\sqrt{2 - \frac{2}{\sqrt{x}}}}$.
Hence, $\frac{\pi}{l} \le \sqrt{2- \frac{2}{\sqrt{x}}}$, or $(\frac{\pi}{l})^2 \le 2- \frac{2}{\sqrt{x}}$.
Rearranging and diving by 2, we get that $\frac{1}{\sqrt{x}} \le 1-\frac{(\frac{\pi}{l})^2}{2}$.
But, by the Taylor series for $\cos \theta$, we know that $\cos \theta \ge 1 - \frac{\theta^2}{2}$, so
$\cos \frac{\pi}{l} \ge 1 - \frac{(\frac{\pi}{l})^2}{2} \ge \frac{1}{\sqrt{x}}$.
Thus, together with \cref{eq:qp} we obtain that $|qp| \ge r_{i-1}$, as claimed.
\end{proof}

\begin{corollary}
\label{poly_is_included}
The polygonal ring defined by $B_1$ and $B_2$ is contained in an annulus centered at $q$ with radii ratio $x=(1+\frac{\eps}{2})^\frac{1}{\alpha}$.
\end{corollary}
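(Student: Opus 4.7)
The plan is to derive the corollary as a direct consequence of \cref{cl:poly_is_included}, essentially by combining the two containments it provides (for $i=1$ and $i=2$) to bound the distances from $q$ of every point in the polygonal ring $B_2 \setminus B_1$.

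First I would handle the outer bound. Since $B_2$ is the regular $l$-gon inscribed in $C_2$, every vertex of $B_2$ lies on $C_2$ and, by convexity, every point of $B_2$ lies in the closed disk of radius $r_2$ centered at $q$. In particular, every point of the polygonal ring lies at distance at most $r_2$ from $q$.

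Next I would handle the inner bound using \cref{cl:poly_is_included} with $i=1$, which states that $C_0 \subseteq B_1$ (interpreting $C_0$ as the closed disk bounded by the circle of radius $r_0$, which is how the claim must be read, since $B_1$ is a convex polygon containing points on $C_0$). Consequently, any point outside $B_1$ lies at distance strictly greater than $r_0$ from $q$. Therefore every point of $B_2 \setminus B_1$ lies at distance at least $r_0$ from $q$.

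Combining the two bounds, the polygonal ring is contained in the annulus $A_q(r_0, r_2)$, and the ratio of radii is
\[
  \frac{r_2}{r_0} = \frac{r_2}{r_1}\cdot\frac{r_1}{r_0} = \sqrt{x}\cdot\sqrt{x} = x = \Bigl(1+\frac{\eps}{2}\Bigr)^{1/\alpha},
\]
which is exactly what the corollary asserts. There is no real obstacle here; the only subtlety is to be careful in reading \cref{cl:poly_is_included} as a containment of closed disks (not merely of bounding circles), which is the sense that makes it useful for bounding distances of points in the polygonal ring.
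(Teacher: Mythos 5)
Your proposal is correct and matches the paper's (implicit) argument exactly: the paper states the corollary as an immediate consequence of \cref{cl:poly_is_included} — $B_2$ inscribed in $C_2$ gives the outer bound $r_2$, $C_0 \subseteq B_1$ gives the inner bound $r_0$, and $r_2/r_0 = (\sqrt{x})^2 = x$. Your reading of the claim as a disk containment is the right one (and is what the paper's proof of the claim actually establishes, via $|qp|\ge r_{i-1}$, together with convexity of $B_i$).
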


\begin{figure}[h]
    \centering
        \includegraphics[width=0.5\textwidth]{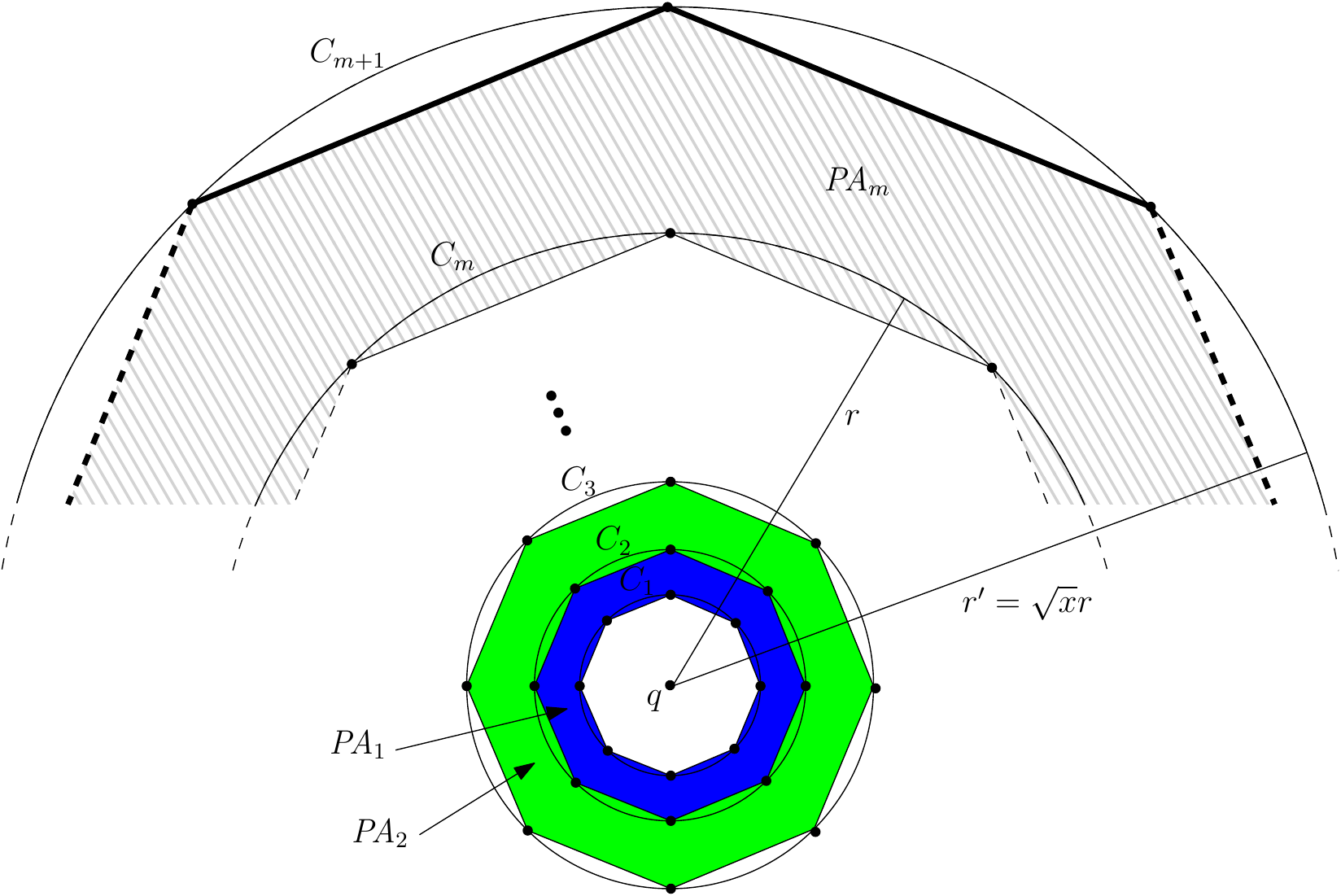}.
    \caption{The polygonal rings $\pa_1,\ldots,\pa_m$. The ring $\pa_j$ lies between the circles $C_{j-1}$ and $C_{j+1}$, for $j > 1$.}
    \label{fig:poly_annuli}
\end{figure}

\subsubsection{Query algorithm}
\label{sec:poly_alg_desc}
We highlight the differences with the query algorithm from \cref{sec:alg_desc}.
Recall that we divided the transmitters into two subsets according to whether they were closer or farther than $r$ from the query point $q$.
We adjust the definitions slightly by setting $r' = |qs_1|x^{m/2}$, where $m$ is the smallest integer for which $|qs_1|x^{m/2} \ge \sqrt{x}r$, and considering a transmitter close to $q$ whenever it lies in the interior of the regular $l$-gon inscribed in the circle of radius $r'$ centered at $q$, see \cref{fig:poly_annuli}. The set of such transmitters is the new $S_c$; the remaining transmitters constitute $S_f$.
The contribution of $s_i\in S_f$ to the sum $\intrf(q)$ is, by \cref{cl:poly_is_included},
\[
\frac{1}{|qs_i|^\alpha} \le \frac{1}{(r')^\alpha} \le \frac{1}{r^\alpha} = \frac{\eps}{2n|qs_1|^\alpha}\,.
\]
Thus the overall contribution of the transmitters in $S_f$ is again at most $\frac{\eps}{2|qs_1|^\alpha}$.

We now partition $A_q(|qs_1|,r')$ into $m$ annuli, each with outer-to-inner radius ratio~$\sqrt{x}$.
For each of the $m+1$ circles defining these annuli, draw the regular $l$-gon inscribed in it.
Let $\pa_1,\ldots,\pa_m$ be the resulting sequence of polygonal rings, numbered from the innermost outwards,  see \cref{fig:poly_annuli}; each ring is semi-open: it includes its inner, but not its outer boundary. By  \cref{cl:poly_is_included}, each $\pa_j$, $j>1$, is contained in the union of two consecutive annuli, which in turn is an annulus of ratio~$x$; $S_c \cap \pa_1$ is contained in the innermost annulus. Also notice that $m=O(k)$, where $k$ is the number of annuli in the circular annulus version, so, by \cref{lem:number_annuli}, %
$m = O(\frac{1}{\eps}\log n)$.

For each ring $\pa_j$, we bound from above the contribution of each $s_i \in S_c \cap \pa_j$ by $1/d^\alpha$, where $d$ is the inner radius of the annulus of ratio $x$ containing $\pa_j$, if $j>1$, and $d=|qs_1|$, if $j=1$. By \cref{lem:moving_s_i} and the subsequent corollary, we obtain a $(1 + \frac{\eps}{2})$-approximation of the overall contribution of the transmitters in $S_c$ to $\intrf(q)$; $\tintrf(q)$ is obtained by combining the two estimates, one from $S_c$ and one from $S_f$.

\begin{figure}[h]
    \centering
        \includegraphics[width=0.3\textwidth]{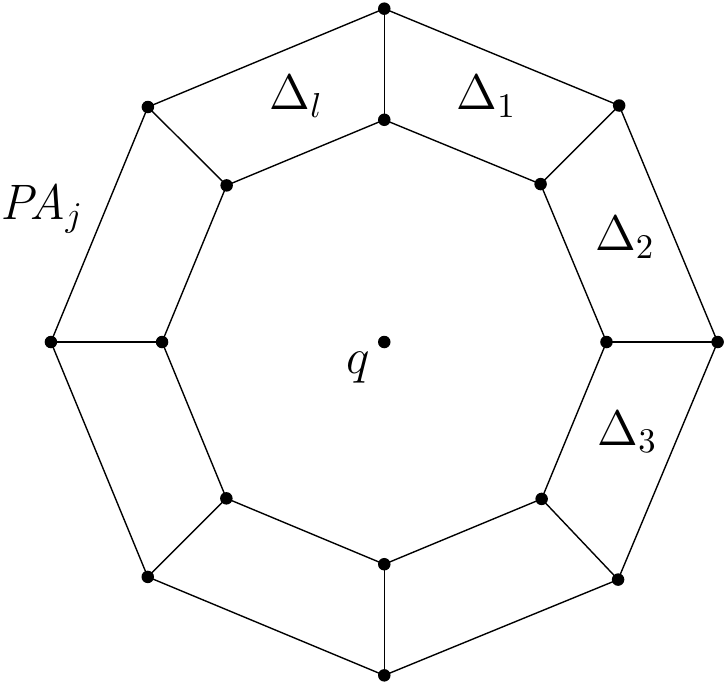}.
    \caption{$\pa_j$ is the union of $l$ trapezoids.}
    \label{fig:trapezoids}
\end{figure}

\subsubsection{Implementation}
Each polygonal ring $\pa_j$ is the union of $l$ isosceles trapezoids; moreover the $i$th trapezoids of all rings are homothets of each other (refer to \cref{fig:trapezoids}) and therefore are delimited by lines of exactly three different orientations.
In the preprocessing stage we compute the following data structures for the set of transmitters $S$.
\subparagraph*{Dynamic nearest neighbor}
  The data structure of Chan~\cite{Chan19} or Kaplan \emph{et al.}~\cite{DBLP:conf/soda/KaplanMRSS17plus}  (see \cref{sec:impl_circular}).
\subparagraph*{Dynamic trapezoid range counting}
  We use $l$ instances of the data structure, one for each family of trapezoids. For the $i$th family we build a three-level orthogonal range counting structure, one for each of the three edge orientations of the trapezoids in the family. The answer to a trapezoid range counting query is the number of points of $S$ lying in the trapezoid.

  A standard three-level orthogonal range counting structure requires $O(n\log^2n)$ space, is constructed in $O(n\log^2n)$ time, and supports $O(\log^3n)$-time range queries \cite{bcko-cgaa-08}.  It can be modified to support insertions and deletions in $O(\log^3n)$ amortized time using the standard partial-rebuilding technique~\cite{o-ddds-83,a-rs-04}.  (One can use any of several different optimized variants of these structures \cite{wl-arrcd-85,c-fadsi-88}. For example, He and Munro~\cite{he-munro-14} describe one with linear space and $O((\log n/\log \log n)^2)$ worst-case query and amortized update time; we stay with comparison-based algorithms and do not attempt to optimize the polylogarithmic factors.)

\medskip

Now, given a query point $q$, we find its closest and second closest transmitter using the data structure for dynamic nearest neighbor in $O(\log^2 n)$ time,
compute the distance $r'$, and construct the (polygonal) rings $\pa_1,\ldots,\pa_m$, where %
$m = O(\frac{1}{\eps}\log n)$.
For each ring $\pa_j$ we proceed as follows. For each of the $l$ trapezoids $\Delta_i$ forming $\pa_j$, we perform an orthogonal range counting query in the $i$th data structure. Let $n_j$ be the sum of the $l$ results. Unless $j = 1$, we add to the value being computed the term $n_j \frac{1}{r_{j-1}^\alpha}$, where $r_{j-1}$ is the radius of the inner circle $C_{j-1}$ of the annulus containing $\pa_j$. If $j=1$, we simply add the term $n_1 \frac{1}{r_1^\alpha}$. Finally, we add to the value being computed the term $|S_f| \cdot \frac{\eps}{2n|qs_1|^\alpha} = (n-1 - \sum_{j=1}^m n_j) \cdot \frac{\eps}{2n|qs_1|^\alpha}$.

In summary, to implement an SINR query, we need to perform one search for the nearest and second-nearest neighbor, followed by $O(\frac{l}{\eps}\log n)=O(\frac{1}{\eps^{3/2}}\log n)$ range searches.

An update is applied to all the underlying data structures.
The following theorem summarizes the main result of this section.

\begin{theorem}\label{th:uniform_polygonal}
  Given the locations of $n$ uniform-power transmitters, one can preprocess them in $O((n/\sqrt{\eps})\log^2 n)$ time and space into a data structure that can answer approximate SINR queries in $O((1/\eps^{3/2})\log^4n)$ time.  Transmitters can be inserted in $O((1/\sqrt{\eps})\log^3n)$ and deleted in $O((1/\sqrt{\eps})\log^3n)$ amortized time.
\end{theorem}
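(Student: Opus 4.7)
The plan is to assemble the ingredients already developed in this section. By \cref{cl:number_edges} each polygonal ring has $l = O(1/\sqrt{\eps})$ edges, and, as noted just before the implementation paragraph, the number of rings is $m = O((1/\eps)\log n)$. Correctness of the returned $\tsinr(q)$ follows from \cref{eps_and_delta} once $\tintrf(q)$ is shown to lie between $\intrf(q)$ and $(1+\eps)\intrf(q)$, and that sandwich was already established in the text using \cref{lem:moving_s_i}, \cref{cor:approx_S_c_contribution}, and the explicit bound on the contribution of the transmitters in $S_f$. So only the resource analysis remains.

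For preprocessing, I would build one dynamic nearest-neighbor structure together with $l$ instances of a three-level orthogonal range-counting structure, one per trapezoid family. Each counting structure requires $O(n\log^2 n)$ space and time to construct, so the total is dominated by the $l$ counting structures, giving $O((n/\sqrt{\eps})\log^2 n)$ overall; the nearest-neighbor structure contributes only linear space and $O(n\log n)$ construction time, and is absorbed.

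For a query, I would first obtain the two nearest transmitters in $O(\log^2 n)$ time, then construct the rings $\pa_1,\ldots,\pa_m$, and for each ring perform $l$ trapezoid range-counting queries costing $O(\log^3 n)$ each. This yields total query cost $O(ml\log^3 n) = O((1/\eps^{3/2})\log^4 n)$, which subsumes the nearest-neighbor step. For updates, I would propagate each insertion or deletion to the nearest-neighbor structure and to all $l$ trapezoid structures, each of which supports amortized $O(\log^3 n)$ updates via partial rebuilding; the total update cost is $O(l\log^3 n) = O((1/\sqrt{\eps})\log^3 n)$, absorbing the cost of the nearest-neighbor update.

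The main obstacle is purely bookkeeping: verifying that the composition of the generic range-searching and nearest-neighbor building blocks quoted in \cref{sec:impl_circular} and in this subsection really produces exactly the stated bounds, once multiplied by the appropriate factors of $l$ and $m$. No additional combinatorial or geometric argument is needed beyond what has already been established in \cref{cl:number_edges,cl:poly_is_included,poly_is_included} and the accompanying approximation analysis.
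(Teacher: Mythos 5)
Your proof is correct and takes essentially the same approach as the paper: build $l = O(1/\sqrt{\eps})$ three-level orthogonal range-counting structures (one per trapezoid family) alongside a dynamic nearest-neighbor structure, appeal to the approximation analysis already established via \cref{lem:moving_s_i}, \cref{cor:approx_S_c_contribution}, and \cref{eps_and_delta}, and multiply out the factors $l$, $m = O((1/\eps)\log n)$, and the per-query/per-update costs. One small bookkeeping detail that you and the paper both gloss over: the cited nearest-neighbor structure of Chan supports deletion in $O(\log^4 n)$ amortized time, which is not actually subsumed by $O((1/\sqrt{\eps})\log^3 n)$ unless $1/\sqrt{\eps} = \Omega(\log n)$; strictly speaking the deletion bound should be $O(\log^4 n + (1/\sqrt{\eps})\log^3 n)$.
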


\section{Extensions}
\label{sec:extensions}

In this section, we extend our result for the uniform power setting to the more general setting where the ratio of the maximum to minimum power is bounded by a constant. That is, let $p_{\min}$ and $p_{\max}$ be the transmission powers of the weakest and strongest transmitters, respectively, and assume, without loss of generality, that $p_{\min} = 1$.  Below we show that the bounds obtained for the uniform power setting (i.e., \cref{th:uniform_polygonal}) continue to hold in the more general setting where $\frac{p_{\max}}{p_{\min}} = p_{\max}$ is bounded by a constant.

We first consider an easy case, where the number of distinct transmission powers is at most a constant. In this case, we do not make any assumption on the ratio $\frac{p_{\max}}{p_{\min}}$. Then, we consider the case mentioned above, where the number of different transmission powers is unlimited, but the ratio $\frac{p_{\max}}{p_{\min}}$ is bounded by a constant.

\subsection{A few powers}\label{unq_ext}

Let $m$ be the number of distinct transmission powers, $T_1,\ldots,T_m$ be the partition of $S$ into subsets by transmission power, put $t_i=|T_i|$, and let $p_i$ be the transmission power of the transmitters in $T_i$.
For each subset $T_i$, we construct the data structure for uniform-power approximate SINR queries, as in \cref{th:uniform_polygonal}.

Now, let $q$ be a receiver. We denote the transmitter of $T_i$ that is closest to $q$ by $s_i$, for $i = 1,\ldots,m$. Moreover, assume without loss of generality that $\frac{p_1}{|qs_1|^\alpha} \ge \frac{p_i}{|qs_i|^\alpha}$, for $i = 1,\ldots,m$. Finally, set
$\intrf_i(q) = \sum_{s \in T_i \setminus \{s_i\}} \frac{1}{|q s|^\alpha}$ and $\intrf_{T_i}(q)=p_i \cdot \intrf_i(q)$. Then, 
$\intrf(q) = \intrf_{T_1}(q) + \cdots + \intrf_{T_m}(q) + \frac{p_2}{|qs_2|^\alpha} + \cdots + \frac{p_m}{|q s_m|^\alpha}$, and our goal is to compute a value $\tintrf(q)$ such that $\intrf(q) \le \tintrf(q) \le (1+\eps)\intrf(q)$.

\subsubsection{Query algorithm}\label{sec:ext_2_alg_desc}

Perform a uniform-power approximate SINR query with $q$ in each of the $m$ data structures.
Let $\tintrf_i(q)$ be the value that is computed by the data structure for $T_i$ and set $\tintrf_{T_i}(q) = p_i \cdot \tintrf_i(q)$, for $i=1,\ldots,m$. Then, $\intrf_{T_i}(q) \le \tintrf_{T_i}(q) \le (1+\eps)\intrf_{T_i}(q)$. Finally, set $\tintrf(q) = \tintrf_{T_1}(q) + \cdots + \tintrf_{T_m}(q) + \frac{p_2}{|qs_2|^\alpha} + \cdots + \frac{p_m}{|q s_m|^\alpha}$. 
It is easy to see that $\intrf(q) \le \tintrf(q) \le (1+\eps)\intrf(q)$.

\begin{theorem}\label{th:few_powers}
  \label{th:polygonal_few}
  Given the locations of $n$ transmitters with $m$ distict transmitting powers, one can preprocess them in $O((n/\sqrt{\eps})\log^2 n)$ time and space into a data structure that can answer approximate SINR queries in $O((m/\eps^{3/2})\log^4n)$ time.  Transmitters can be inserted in $O((1/\sqrt{\eps})\log^3n)$ and deleted in $O((1/\sqrt{\eps})\log^3n)$ amortized time.
\end{theorem}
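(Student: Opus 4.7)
The plan is to verify all four claims (preprocessing, space, query, update) essentially by summing over power classes, after observing that the query algorithm described in \cref{sec:ext_2_alg_desc} is correct. Build $m$ independent instances of the uniform-power data structure of \cref{th:uniform_polygonal}, one per power class $T_i$ of size $t_i$. For space and preprocessing, instance $i$ costs $O((t_i/\sqrt{\eps})\log^2 t_i)$, and $\sum_{i=1}^m t_i = n$ with $\log t_i \le \log n$ yields the total $O((n/\sqrt{\eps})\log^2 n)$.

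For a query, I would first run one approximate uniform-power query in each $T_i$, which simultaneously returns the closest transmitter $s_i \in T_i$ to $q$ and a value $\tintrf_i(q)$ with $\intrf_i(q) \le \tintrf_i(q) \le (1+\eps)\intrf_i(q)$. Then I would scan the $m$ candidates and pick the index (relabeled $1$) maximizing $p_i/|qs_i|^\alpha$; since $\beta>1$, this is the only transmitter $q$ can possibly receive. Finally I would assemble
\[
\tintrf(q) = \sum_{i=1}^{m} p_i\,\tintrf_i(q) + \sum_{j=2}^{m}\frac{p_j}{|qs_j|^\alpha},
\]
and return $\tsinr(q,s_1)=(p_1/|qs_1|^\alpha)/\tintrf(q)$. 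The first sum is a $(1+\eps)$-approximation of $\sum_i \intrf_{T_i}(q)$ because each per-class estimate is, and the second sum is exact; so $\intrf(q) \le \tintrf(q) \le (1+\eps)\intrf(q)$, which together with \cref{eps_and_delta} yields the approximate-query guarantee. The running time is $m$ times that of a single uniform query, namely $O((m/\eps^{3/2})\log^4 n)$.

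For updates, an inserted or deleted transmitter with power $p_i$ touches only the structure for $T_i$, and within that structure the update cost from \cref{th:uniform_polygonal} is $O((1/\sqrt{\eps})\log^3 n)$ amortized (using $\log t_i \le \log n$). Thus the claimed update bound is immediate.

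There is no real obstacle here; the only subtlety worth flagging is that the query must return the unique potentially-received transmitter, not simply the nearest neighbor in any single class. This is handled by selecting the candidate among $s_1,\ldots,s_m$ that maximizes $p_i/|qs_i|^\alpha$, which adds only $O(m)$ additional work and is absorbed by the $m$ per-class queries. All remaining steps are direct applications of \cref{th:uniform_polygonal} and the decomposition $\intrf(q)=\sum_i \intrf_{T_i}(q) + \sum_{j\ne 1} p_j/|qs_j|^\alpha$ spelled out in \cref{sec:ext_2_alg_desc}.
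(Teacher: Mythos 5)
Your proposal is correct and follows essentially the same route as the paper: per-class uniform-power structures, the decomposition $\intrf(q)=\sum_i \intrf_{T_i}(q)+\sum_{j\ge 2}p_j/|qs_j|^\alpha$, approximation of the first sum class-by-class, and exact computation of the second. The one point you flag explicitly---scanning the $m$ class-representatives to identify the unique candidate maximizing $p_i/|qs_i|^\alpha$---is implicit in the paper (it appears as a WLOG relabeling), so there is no substantive difference.
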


\subsection{A small range of powers}\label{ext_bnd}

Assume without loss of generality that $p_{\min}=1$ and $p_{\max}=c$, where $c$ is a constant. We extend our result for the uniform power setting to this more general setting. We first require that $1 < c \le 1+\frac{\eps}{4}$. Then, we apply \cref{th:few_powers} to remove the dependency of $c$ on $\eps$, so that $c$ can be any constant greater than~1.

In this section we assume that $\eps$ is sufficiently small, say, less than $\min\{1, 4(\beta - 1)\}$. Also,
for the sake of clarity, we use circular annuli (rather than polygonal rings), but it should be clear that one can replace the annuli by rings as in the uniform power setting.

\subsubsection{Range depends on \texorpdfstring{$\eps$}{epsilon}}

Let $q$ be a receiver, and consider the closest and second closest transmitters to $q$, by Euclidean distance. Denote the one whose signal strength at $q$ is stronger by $s$ and the other one by $s_1$, and denote all the other transmitters by $s_2,\ldots,s_{n-1}$. Notice that it is still possible that there exists a third transmitter $s_i \in S$, whose signal strength at $q$ is greater than that of $s$. However, in this case, it is easy to see that $\sinr(q,s_i) \ll \beta$ (and of course also $\sinr(q,s) \ll \beta$), so our algorithm will return the correct answer in this case. Let $\intrf(q) = \sum_{1}^{n-1} \frac{p_i}{|q s_i|^\alpha}$. We wish to compute a value $\tintrf(q)$ such that $\intrf(q) \le \tintrf(q) \le (1+\eps)\intrf(q)$.

\paragraph*{Query algorithm}
The query algorithm is almost identical to the circular-annuli query algorithm, so we only list the differences.	
\begin{itemize}
\item
Recall that in the annuli solution the transmitters are divided into two subsets, the close and far subsets, by $r = (\frac{2n}{\eps})^{1/\alpha} \cdot |qs_1|$. Here we set $r=(\frac{2cn}{\eps})^{1/\alpha} \cdot |qs_1|$ to define the two subsets $S_c$ and $S_f$.
Let $s_i \in S_f$, then its contribution to $\intrf(q)$ is
\begin{align*}
  \frac{p_i}{|qs_i|^\alpha} & \le \frac{p_i}{r^\alpha} \le \frac{c}{r^\alpha} = \frac{c}{((\frac{2cn}{\eps})^{1/\alpha} \cdot |qs_1|)^\alpha} \\
  & = \frac{\eps}{2n|qs_1|^\alpha} \le \frac{\eps\cdot p_1}{2n|qs_1|^\alpha} \le \frac{\eps}{2n}\intrf(q)\,.
\end{align*}
This implies that the overall contribution to $\intrf(q)$ of the transmitters in $S_f$ is bounded by $\frac{\eps}{2}\intrf(q)$.
\item
In order to approximate the overall contribution of the transmitters in $S_c$, we partition the annulus $A_q(|qs_1|,r)$ into $k$ semi-open annuli, $A_1,\ldots,A_k$, such that the ratio between their outer radius and inner radius is $(\frac{1}{c}\cdot(1+\frac{\eps}{2}))^\frac{1}{\alpha}$. For each $A_j$, we approximate the contribution of each transmitter $s_i \in S_c \cap A_j$ by $\frac{c}{d^\alpha}$, where $d$ is the inner radius of $A_j$, that is, we approximate the contribution of $s_i$ by moving it to the inner circle of $A_j$ and assigning it power $c$. We prove below (\cref{cor:ext_approx_S_c_contribution}) that this yields a $(1 + \frac{\eps}{2})$-approximation of the overall contribution to $\intrf(q)$ of the transmitters in $S_c$. 
\item
The number of annuli in this version is still $O(\frac{1}{\eps}\log \frac{n}{\eps})$ as proved in \cref{lem:number_annuli_ext} below.
\end{itemize}

\begin{lemma}
Let $s_i \in S_c$ and let $A=A_q(d,d')$ be the annulus to which $s_i$ belongs. 
Then, by moving $s_i$ to the inner circle of $A$ and assigning it power $c$, one obtains a $(1+\frac{\eps}{2})$-approximation of the contribution of $s_i$ to $\intrf(q)$. 
\end {lemma}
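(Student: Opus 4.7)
The plan is to mirror the calculation from \cref{lem:moving_s_i} almost verbatim, with two small modifications: the outer-to-inner ratio of the annulus is now $(\frac{1}{c}(1+\frac{\eps}{2}))^{1/\alpha}$ rather than $(1+\frac{\eps}{2})^{1/\alpha}$, and the ``approximated'' contribution uses the uniform power $c$ in place of the true power $p_i \in [1,c]$.

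First, I would record the two containment facts that come directly from the hypothesis: since $s_i \in A = A_q(d,d')$, we have $d \le |qs_i| < d'$, and by the construction in the previous bullet list, $(d'/d)^\alpha = \frac{1}{c}(1+\frac{\eps}{2})$. Combined with $1 \le p_i \le c$, these immediately give
\[
\frac{p_i}{|qs_i|^\alpha} \le \frac{c}{d^\alpha},
\]
so replacing $s_i$'s actual contribution by $c/d^\alpha$ is always an overestimate (this matters for the subsequent corollary that $\intrf(q) \le \tintrf(q)$).

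Second, I would bound the overestimation ratio. Using $|qs_i| < d'$ in the numerator and $p_i \ge 1$ in the denominator,
\[
\frac{c/d^\alpha}{p_i/|qs_i|^\alpha} = \frac{c}{p_i}\cdot\frac{|qs_i|^\alpha}{d^\alpha} < \frac{c}{p_i}\left(\frac{d'}{d}\right)^\alpha = \frac{c}{p_i}\cdot\frac{1+\eps/2}{c} = \frac{1+\eps/2}{p_i} \le 1+\frac{\eps}{2}.
\]
This is exactly the desired approximation bound, and the proof is complete.

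There is no real obstacle here: the only subtlety is making sure the choice of the ratio $(\frac{1}{c}(1+\frac{\eps}{2}))^{1/\alpha}$ is used precisely in the step where $(d'/d)^\alpha$ is replaced, and that the factor $c/p_i$ that appears from upgrading the power of $s_i$ is canceled against the $1/c$ baked into the radius ratio. Since $p_i \ge 1 = p_{\min}$, the residual factor $1/p_i$ is harmless, which is exactly why the construction chose this particular radius ratio to absorb the range-of-powers penalty $c$.
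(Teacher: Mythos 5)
Your proof is correct and follows essentially the same route as the paper's: both use $d \le |qs_i| < d'$, the radius-ratio bound $(d'/d)^\alpha \le \frac{1}{c}(1+\frac{\eps}{2})$, and the fact that $1 \le p_i \le c$, with the $1/c$ in the ratio canceling the penalty for upgrading $p_i$ to $c$. The only cosmetic difference is that you bound the overestimation ratio directly, whereas the paper presents the same content as a chain of inequalities sandwiching $c/d^\alpha$ between $p_i/|qs_i|^\alpha$ and $(1+\frac{\eps}{2})\cdot p_i/|qs_i|^\alpha$.
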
 
\begin{proof}
Since $s_i \in A$, we know that $d \le |qs_i| < d'$. Recall that $d'/d \le (\frac{1}{c}\cdot(1+\frac{\eps}{2}))^\frac{1}{\alpha}$. Hence, $\frac{|qs_i|}{d} < \frac {d'}{d} \le (\frac{1}{c}\cdot(1+\frac{\eps}{2}))^\frac{1}{\alpha}$, or 
$(\frac{|qs_i|}{d})^\alpha < \frac{1}{c}\cdot (1+\frac{\eps}{2})$.
By rearranging the latter inequality, we obtain $ \frac{c}{d^\alpha}< (1+\frac{\eps}{2})\cdot \frac{1}{|qs_i|^\alpha}$.
Therefore,
\[
\frac{p_i}{|qs_i|^\alpha} \le
\frac{p_i}{d^\alpha} \le
\frac{c}{d^\alpha} <
(1+\frac{\eps}{2})\cdot \frac{1}{|qs_i|^\alpha} \le
(1+\frac{\eps}{2})\cdot \frac{p_i}{|qs_i|^\alpha}.
\]
\end{proof}
\begin{corollary}
\label{cor:ext_approx_S_c_contribution}
By doing this for each transmitter in $S_c$, one obtains a $(1 + \frac{\eps}{2})$-approximation of the overall contribution of the transmitters in $S_c$ to $\intrf(q)$.  
\end{corollary}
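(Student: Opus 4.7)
The plan is to observe that this corollary follows almost immediately from the preceding lemma by linearity. Let $\widetilde{I}_c(q)$ denote the estimate obtained by summing, over all $s_i \in S_c$, the term $c/d_i^\alpha$, where $d_i$ is the inner radius of the annulus $A_{j(i)}$ containing~$s_i$; and let $I_c(q) = \sum_{s_i\in S_c} p_i/|qs_i|^\alpha$ denote the true overall contribution.

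First I would apply the preceding lemma to each $s_i\in S_c$ individually. The lemma's chain of inequalities actually yields two facts: $p_i/|qs_i|^\alpha \le c/d_i^\alpha$ (so the per-transmitter estimate is an upper bound on the true contribution), and $c/d_i^\alpha < (1+\eps/2)\, p_i/|qs_i|^\alpha$ (so the per-transmitter estimate is within a factor $1+\eps/2$ of the truth). Summing the first inequality over all $s_i\in S_c$ gives $I_c(q) \le \widetilde{I}_c(q)$, and summing the second gives $\widetilde{I}_c(q) \le (1+\eps/2)\, I_c(q)$. Together these establish $I_c(q) \le \widetilde{I}_c(q) \le (1+\eps/2)\, I_c(q)$, which is exactly the claim.

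There is no real obstacle here; the corollary is a routine aggregation of the per-transmitter bound supplied by the lemma. The only thing to be careful about is that the lemma has been stated with the correct two-sided bound, so that both directions of the approximation pass through the summation. No independence or cancellation issues arise since all terms are positive and each is estimated independently.
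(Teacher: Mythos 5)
Your proof is correct and matches the paper's (implicit) reasoning: the corollary is stated without proof precisely because it follows from the preceding lemma by summing the two-sided per-transmitter bound $p_i/|qs_i|^\alpha \le c/d_i^\alpha < (1+\eps/2)\,p_i/|qs_i|^\alpha$ over all $s_i \in S_c$, exactly as you do. Nothing is missing; the linearity argument is the intended one.
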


\begin{lemma}\label{lem:number_annuli_ext}
The number of annuli $k$ considered by the algorithm is $O(\frac{1}{\eps}\log \frac{n}{\eps})$. 
\end{lemma}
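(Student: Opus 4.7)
The plan is to mimic the proof of \cref{lem:number_annuli}, but taking into account that the outer-to-inner radius ratio of the subannuli is now $\bigl(\tfrac{1}{c}(1+\tfrac{\eps}{2})\bigr)^{1/\alpha}$ rather than $(1+\tfrac{\eps}{2})^{1/\alpha}$, and that $r$ has been rescaled by a factor of $c^{1/\alpha}$. First I would write
\[
k \;=\; \Bigl\lceil \log_{(\frac{1}{c}(1+\frac{\eps}{2}))^{1/\alpha}} \frac{r}{|qs_1|}\Bigr\rceil,
\]
and then substitute $r/|qs_1| = (2cn/\eps)^{1/\alpha}$. The two $1/\alpha$ factors cancel in the change-of-base, leaving
\[
k \;\le\; 1 + \frac{\log(2cn/\eps)}{\log\bigl((1+\eps/2)/c\bigr)}.
\]

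Next I would bound the numerator and denominator separately. Since we are assuming $1 < c \le 1+\eps/4 \le 2$, the numerator is $\log(2cn/\eps) \le \log(4n/\eps) = O(\log(n/\eps))$. The nontrivial step is to lower-bound the denominator, which is the only place where the assumption $c \le 1+\eps/4$ is essential (otherwise $(1+\eps/2)/c$ could be $\le 1$ and the partition would be ill-defined). Using $c \le 1+\eps/4$,
\[
\frac{1+\eps/2}{c} \;\ge\; \frac{1+\eps/2}{1+\eps/4} \;=\; 1 + \frac{\eps/4}{1+\eps/4} \;\ge\; 1 + \frac{\eps}{8},
\]
where the last inequality uses $\eps \le 1$. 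Then, exactly as in \cref{lem:number_annuli}, the elementary inequality $2^x \le 1+x$ for $x \in [0,1]$ (equivalently, $\log(1+y) \ge y/2$ in the appropriate range) gives $\log(1+\eps/8) \ge \eps/16$, so the denominator is $\Omega(\eps)$.

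Combining the two bounds yields $k = O(\tfrac{1}{\eps}\log(n/\eps))$, as claimed. The only real obstacle is ensuring the denominator bound is clean: one must invoke the hypothesis $c \le 1 + \eps/4$ together with the smallness assumption $\eps < 1$ made at the start of \cref{ext_bnd}; this is why the lemma requires the range of powers to shrink with $\eps$ in this subsection (the dependence on $c$ being a constant independent of $\eps$ is only removed later, via \cref{th:few_powers}).
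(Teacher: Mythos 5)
Your proposal is correct and follows essentially the same route as the paper's proof: write $k$ as the ceiling of a logarithm, change base so the $1/\alpha$ factors cancel, and then observe that $c \le 1+\eps/4$ forces $\tfrac{1}{c}(1+\tfrac{\eps}{2}) \ge 1+\Omega(\eps)$, whence the denominator is $\Omega(\eps)$ by the elementary inequality $2^x \le 1+x$. The only differences are cosmetic (the paper derives the slightly tighter $1+\tfrac{\eps}{5}$ in place of your $1+\tfrac{\eps}{8}$, and leaves the final $\Omega(\eps)$ step implicit by analogy with \cref{lem:number_annuli}).
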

\begin{proof}
Clearly, $k = \lceil \log_{(\frac{1}{c}\cdot(1+\frac{\eps}{2}))^\frac{1}{\alpha}} \frac{r}{|qs_1|} \rceil$. But,
\[
\log_{(\frac{1}{c}\cdot(1+\frac{\eps}{2}))^\frac{1}{\alpha}} \frac{r}{|qs_1|} = 
\log_{(\frac{1}{c}\cdot(1+\frac{\eps}{2}))^\frac{1}{\alpha}} (\frac{2cn}{\eps})^\frac{1}{\alpha} =
\frac{\log (\frac{2cn}{\eps})^\frac{1}{\alpha} }{\log (\frac{1}{c}\cdot(1+\frac{\eps}{2}))^\frac{1}{\alpha}} =
\frac{\log \frac{2cn}{\eps}}{\log (\frac{1}{c}\cdot(1+\frac{\eps}{2}))}\,,
\]
 and, since $1 < c < 1+\frac{\eps}{4}$, it holds that $\frac{1}{c}\cdot(1+\frac{\eps}{2}) > 1+\frac{\eps}{4+\eps} \ge 1+\frac{\eps}{5}$. We conclude that 
\[
\log_{(\frac{1}{c}\cdot(1+\frac{\eps}{2}))^\frac{1}{\alpha}} \frac{r}{|qs_1|} = O(\frac{1}{\eps}\log \frac{n}{\eps})\,.
\] 
\end{proof}

\begin{lemma}
  \label{lem:small_range_eps}
  Fix $\eps>0$.  Given the locations of $n>1/\eps$ transmitters, whose powers differ by a factor of at most $1+\eps/4$, one can preprocess them in $O((n/\sqrt{\eps})\log^2 n)$ time and space into a data structure that can answer approximate SINR queries in $O((1/\eps^{3/2})\log^4n)$ time.  Transmitters can be inserted in $O((1/\sqrt{\eps})\log^3n)$ and deleted in $O((1/\sqrt{\eps})\log^3n)$ amortized time.
\end{lemma}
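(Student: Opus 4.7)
The plan is to assemble the proof of \cref{lem:small_range_eps} from the pieces already laid out in this subsection. Concretely, the query algorithm described above (split $S$ into $S_c$ and $S_f$ at radius $r=(2cn/\eps)^{1/\alpha}|qs_1|$, approximate the $S_f$-contribution by an upper bound per transmitter, and the $S_c$-contribution by snapping each transmitter in an annulus $A_j$ onto its inner circle with assigned power $c$) returns a value $\tintrf(q)$ computed as
\[
  \tintrf(q) \;=\; |S_f|\cdot\frac{\eps}{2n|qs_1|^\alpha} \;+\; \sum_{j=1}^{k} |S_c\cap A_j|\cdot\frac{c}{d_j^\alpha},
\]
where $d_j$ is the inner radius of $A_j$. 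First I would verify the approximation inequalities. Each term in the first summand upper-bounds the contribution of a single $s_i\in S_f$ by the earlier $S_f$ computation, and each term in the second summand upper-bounds the contribution of an $s_i\in S_c$ by the preceding $S_c$-lemma, so $\tintrf(q)\ge\intrf(q)$.

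For the reverse direction, the bound on the per-point $S_f$ contribution combined with $p_1\ge p_{\min}=1$ gives $\frac{1}{|qs_1|^\alpha}\le\frac{p_1}{|qs_1|^\alpha}\le\intrf(q)$, hence $|S_f|\cdot\frac{\eps}{2n|qs_1|^\alpha}\le\frac{\eps}{2}\intrf(q)$. Combining this with \cref{cor:ext_approx_S_c_contribution} yields
\[
  \tintrf(q)\;\le\;\tfrac{\eps}{2}\intrf(q)+(1+\tfrac{\eps}{2})\intrf(q)\;=\;(1+\eps)\intrf(q),
\]
establishing the required approximation quality. Then $\tsinr(q)$ is computed from $\tintrf(q)$ as in \cref{sec:uniform}, giving the approximate SINR guarantee.

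Next I would convert the description from circular annuli to polygonal rings in exact parallel with \cref{sec:poly}. Setting $x=(1+\eps/2)^{1/\alpha}/c\ge 1+\eps/5$ (using $1<c\le 1+\eps/4$), and using regular inscribed $l$-gons with $l=O(1/\sqrt{\eps})$ exactly as in \cref{cl:number_edges} and \cref{cl:poly_is_included}, the polygonal rings $\pa_1,\dots,\pa_m$ approximate each annulus up to ratio $x$, so the per-ring overestimate absorbs the same factor $(1+\eps/2)$ as in the circular version. The count of rings inherits the bound $m=O(\tfrac{1}{\eps}\log n)$ from \cref{lem:number_annuli_ext} together with the assumption $n>1/\eps$.

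The data-structural side is now identical to the uniform case: a dynamic nearest-neighbor structure that returns the first two neighbors (used to obtain $s$ and $s_1$), and $l$ copies of a three-level orthogonal range-counting structure, one per trapezoid family. Each query performs one nearest-neighbor search plus $O(lm)=O(\tfrac{1}{\eps^{3/2}}\log n)$ range-counting queries at $O(\log^3 n)$ each, giving the claimed $O((1/\eps^{3/2})\log^4 n)$ query time; preprocessing, insertion, and deletion bounds follow unchanged from \cref{th:uniform_polygonal}. The only nontrivial obstacle is bookkeeping the constant $c$ in both the radius definition $r=(2cn/\eps)^{1/\alpha}|qs_1|$ and in the per-point estimate $c/d_j^\alpha$, and verifying that the resulting weakened ratio $(1+\eps/2)/c$ still exceeds $1+\eps/5$ so that \cref{lem:number_annuli_ext} supplies the same $O(\tfrac{1}{\eps}\log n)$ ring count as the uniform case; everything else is a direct transplant of the proof of \cref{th:uniform_polygonal}.
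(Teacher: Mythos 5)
Your proof assembles the paper's implicit argument for this lemma in essentially the same way: adopt the modified threshold $r = (2cn/\eps)^{1/\alpha}|qs_1|$ and per-point estimate $c/d_j^\alpha$, deduce $\intrf(q) \le \tintrf(q) \le (1+\eps)\intrf(q)$ from \cref{cor:ext_approx_S_c_contribution} together with the $S_f$ bound, and transfer to polygonal rings so that the data-structure bounds of \cref{th:uniform_polygonal} carry over with $m = O(\frac{1}{\eps}\log n)$ supplied by \cref{lem:number_annuli_ext}. One notational slip: in ``$x=(1+\eps/2)^{1/\alpha}/c\ge 1+\eps/5$'' the intended quantity is $x=\bigl(\tfrac{1}{c}(1+\tfrac{\eps}{2})\bigr)^{1/\alpha}$, and the $1+\eps/5$ lower bound holds for $x^\alpha$ rather than for $x$ itself (for large $\alpha$, $x$ is only $\ge (1+\eps/5)^{1/\alpha}$); this does not affect the conclusions $l=O(1/\sqrt\eps)$ and $m=O(\tfrac1\eps\log n)$, which remain correct for constant $\alpha$.
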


\subsubsection{Range does not depend on \texorpdfstring{$\eps$}{epsilon}}
\label{ext_comb}
Given a constant $c > 1$ which does not depend on $\eps$, we divide the range of powers into $m = \lceil \log_{1 + \frac{\eps}{4}} c \rceil$ subranges, such that in each subrange, the ratio between the maximum power and minimum power is at most $1 + \frac{\eps}{4}$. We then apply \cref{th:few_powers} and \cref{lem:small_range_eps} to obtain the following theorem. We omit the easy details.

\begin{theorem}
  \label{th:small_range}
	Fix $\eps>0$.  Given the locations of $n>1/\eps$ transmitters, whose powers differ by a factor of at most $c$,
  one can preprocess them in $O((n/\sqrt{\eps})\log^2 n)$ time and space into a data structure that can answer approximate SINR queries in $O((m/\eps^{3/2})\log^4n)$ time, where $m = \lceil \log_{1 + \frac{\eps}{4}} c \rceil$.  Transmitters can be inserted in $O((1/\sqrt{\eps})\log^3n)$ and deleted in $O((1/\sqrt{\eps})\log^3n)$ amortized time.
\end{theorem}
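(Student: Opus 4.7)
The plan is to reduce this case to the two previously established results by a geometric partitioning of the power range. First I would partition the interval $[1,c]$ into $m = \lceil \log_{1+\eps/4} c \rceil$ consecutive subintervals $I_1,\ldots,I_m$ so that within each $I_i$ the ratio of the largest to the smallest endpoint is at most $1+\eps/4$. Let $T_i\subseteq S$ be the transmitters whose power lies in $I_i$ and set $n_i=|T_i|$, so $\sum_i n_i = n$. For each $T_i$, I would build the data structure of \cref{lem:small_range_eps}; this is legitimate because the power ratio inside $T_i$ is at most $1+\eps/4$ by construction.

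Since the preprocessing cost of \cref{lem:small_range_eps} is linear in the size of its input up to a $(1/\sqrt{\eps})\log^2 n$ factor, the total preprocessing time and space sum to
\[
  \sum_{i=1}^m O\!\left(\frac{n_i}{\sqrt{\eps}}\log^2 n_i\right) = O\!\left(\frac{n}{\sqrt{\eps}}\log^2 n\right).
\]
An insertion or deletion of a transmitter $s$ touches only the single data structure whose subrange contains $s$'s power, so the amortized update cost inherits the $O((1/\sqrt{\eps})\log^3 n)$ bound from \cref{lem:small_range_eps}.

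For a query point $q$, I would follow the combining scheme of \cref{sec:ext_2_alg_desc} verbatim: query each of the $m$ per-subrange structures to obtain the in-class nearest transmitter $s_i^\ast\in T_i$ together with an estimate $\tintrf_{T_i}(q)$ satisfying $\intrf_{T_i}(q)\le \tintrf_{T_i}(q)\le (1+\eps)\intrf_{T_i}(q)$; pick the globally strongest candidate $s^\ast=\arg\max_i p_i/|qs_i^\ast|^\alpha$; and output
\[
  \tintrf(q) = \sum_{i=1}^m \tintrf_{T_i}(q) + \sum_{i:\, s_i^\ast\ne s^\ast} \frac{p_i}{|qs_i^\ast|^\alpha}.
\]
Summing $m$ per-subrange queries yields the stated $O((m/\eps^{3/2})\log^4 n)$ query time, and linearity of the sum immediately gives $\intrf(q)\le \tintrf(q)\le (1+\eps)\intrf(q)$.

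The only mildly delicate point, and the sole place where I expect any subtlety, is to check that the per-subrange $(1+\eps)$-approximations aggregate to a genuine $(1+\eps)$-approximation of the global interference rather than degrading by a factor of $m$; this is immediate because each $\tintrf_{T_i}(q)$ sandwiches $\intrf_{T_i}(q)$ within the same multiplicative window $[1,1+\eps]$, so their sum sandwiches $\sum_i \intrf_{T_i}(q)$ in the same window, and the remaining terms $p_i/|qs_i^\ast|^\alpha$ for $i$ with $s_i^\ast\ne s^\ast$ are contributed exactly. The claimed bounds then follow directly.
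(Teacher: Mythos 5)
Your proof takes essentially the same route as the paper's: the paper states (in \cref{ext_comb}) that it divides the power range into $m = \lceil\log_{1+\eps/4}c\rceil$ subranges, each of ratio at most $1+\eps/4$, then applies \cref{th:few_powers} and \cref{lem:small_range_eps}, explicitly omitting ``the easy details.'' You supply exactly those details --- building one \cref{lem:small_range_eps} structure per subrange, combining the $m$ partial interference estimates via the scheme of \cref{sec:ext_2_alg_desc}, and verifying that the per-group $(1+\eps)$-sandwiches add up to a $(1+\eps)$-sandwich on the total interference --- and the cost accounting is right.

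One notational quibble worth fixing: in the combining formula $\tintrf(q) = \sum_{i=1}^m \tintrf_{T_i}(q) + \sum_{i:\, s_i^\ast\ne s^\ast} \frac{p_i}{|qs_i^\ast|^\alpha}$, transmitters inside a single $T_i$ now have \emph{different} powers, so the ``$p_i$'' in the second sum should really be $p(s_i^\ast)$, the actual power of the in-group winner. Related to this, the small-range structure of \cref{lem:small_range_eps} returns the stronger of the two Euclidean-nearest transmitters in $T_i$, which need not be the very strongest in $T_i$; the paper handles this case inside \cref{ext_bnd} by arguing that when it happens, $\sinr$ is far below $\beta$ and the answer is therefore correct regardless, and your proof inherits that guarantee by invoking \cref{lem:small_range_eps} as a black box, so no gap results --- but it would be worth a sentence acknowledging that $s_i^\ast$ is only the structure's returned candidate, not necessarily the true in-group maximizer.
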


\section{Non-uniform power}
\label{sec:non-uniform}
Let $q$ be a receiver.
For a transmitter $s \in S$, the \emph{strength} of its signal at $q$ is $\nrg(s,q)=\frac{p(s)}{|qs|^\alpha}$ and the (multiplicatively-weighted) distance between $q$ and $s$ is $\dist(q,s) = \nrg(s,q)^{-1/\alpha} = \frac{1}{p(s)^{1/\alpha}} \cdot |qs|$. Let $s$ be the closest transmitter to $q$ according to $\dist$. Set $\intrf(q)=\sum_{s' \in S\setminus\{s\}} \nrg(s',q)$, then $\sinr(q,s)=\frac{\nrg(s,q)}{\intrf(q)}$, where we once again assume for clarity of presentation that there is no background noise, i.e., $N=0$. When $s$ is the closest transmitter to $q$, we will write $\sinr(q)$ instead of $\sinr(q,s)$.

Fix $\eps > 0$. Again, we wish to approximate $\sinr(q)$ by computing
$\tintrf(q)$ such that $\intrf(q) \leq \tintrf(q) \leq (1+\eps)\intrf(q)$ and setting $\tsinr(q)=\frac{\nrg(s,q)}{\tintrf(q)}$.
%
%
As in the uniform case, we start with a more straightforward but less efficient solution and then improve it.

\subsection{Conical shells}
\label{sec:conic_shells}
Let $q$ be a receiver and
let $s \in S$ be the closest transmitter to $q$ according to $\dist$, i.e., the one whose signal strength at $q$ is the highest. Let $s_1,\ldots,s_{n-1}$ be the transmitters in $S \setminus \{s\}$, and assume without loss of generality that $s_1$ is the second closest transmitter to $q$ among the transmitters in $S$. Recall that $\intrf(q) = \sum_{i=1}^{n-1} \nrg(s_i,q)$ and that we wish to compute a value $\tintrf(q)$ such that $\intrf(q) \le \tintrf(q) \le (1+\eps)\intrf(q)$.
We will need the following simple observation.
\begin{observation}
\label{obs:intrf_non-uni}
$\intrf(q)$ is the sum of $n-1$ positive terms of which $\nrg(s_1,q)$ is the largest, so we have
$\nrg(s_1,q) \le \intrf(q) \le (n-1)\nrg(s_1,q) \le n \cdot \nrg(s_1,q)$.
\end{observation}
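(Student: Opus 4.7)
The statement is the non-uniform analogue of \cref{obs:intrf}, so my plan is to mimic that argument, with the only subtlety being that ``largest term'' must now be interpreted with respect to the weighted measure $\nrg$ rather than the plain reciprocal distance. The first step is to observe that by the definition $\dist(q,s') = \nrg(s',q)^{-1/\alpha}$, the map $\nrg(\cdot,q) \mapsto \dist(q,\cdot)$ is strictly monotone decreasing (since $\alpha \ge 1$ and all quantities are positive). Hence, ordering the transmitters of $S \setminus \{s\}$ by increasing $\dist(q,\cdot)$ coincides with ordering them by decreasing $\nrg(\cdot,q)$. The assumption that $s_1$ is the second-closest transmitter to $q$ under $\dist$ therefore yields $\nrg(s_1,q) \ge \nrg(s_i,q)$ for every $i = 1,\ldots,n-1$.

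Given this, the two inequalities are immediate. For the lower bound, $\nrg(s_1,q)$ appears as one of the $n-1$ positive summands constituting $\intrf(q)$, so $\intrf(q) \ge \nrg(s_1,q)$. For the upper bound, each of the $n-1$ summands is at most $\nrg(s_1,q)$, so $\intrf(q) \le (n-1)\nrg(s_1,q) \le n\cdot\nrg(s_1,q)$. There is no real obstacle here; the only point worth verifying carefully is the equivalence of the two orderings, which is essentially a restatement of the definition of $\dist$. The observation will then be used exactly as \cref{obs:intrf} was used in \cref{sec:alg_desc}: to split transmitters into a ``close'' set, whose combined contribution is approximated by a ring/shell decomposition, and a ``far'' set, whose combined contribution is bounded by comparing it against $\nrg(s_1,q)$ and hence against $\intrf(q)$.
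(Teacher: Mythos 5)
Your proof is correct, and it matches the implicit reasoning the paper relies on (the paper leaves this observation unproved precisely because it follows immediately once one notes, as you do, that ordering by $\dist$ is the reverse of ordering by $\nrg$). No gaps.
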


\subparagraph*{Query algorithm}

Let $q$ be a query point. First, we find $s$ and $s_1$, as defined above.
Next, we set $e_0 = \frac{\eps}{2n} \nrg(s_1,q)$, and divide the transmitters in $S \setminus \{s\}$ into two subsets, $S_c$ and $S_f$, where $S_c$ consists of the transmitters with signal strength at $q$ greater than $e_0$, and $S_f$ of the remaining ones. %
We now approximate the overall contribution to $\intrf$ of the transmitters in $S_f$ and in $S_c$ separately and let $\tintrf$ be the sum of the two approximations.

The contribution of a single transmitter $s_i \in S_f$ to the sum $\intrf(q)$ is
$\nrg(s_i,q) \le e_0 = \frac{\eps}{2n} \nrg(s_1, q)$,
for a total of at most $|S_f| \cdot \frac{\eps}{2n} \nrg(s_1,q) \le \frac{\eps}{2} \nrg(s_1,q)$ over all of $S_f$.

\begin{figure}[h]
    \centering
        \includegraphics[width=0.5\textwidth]{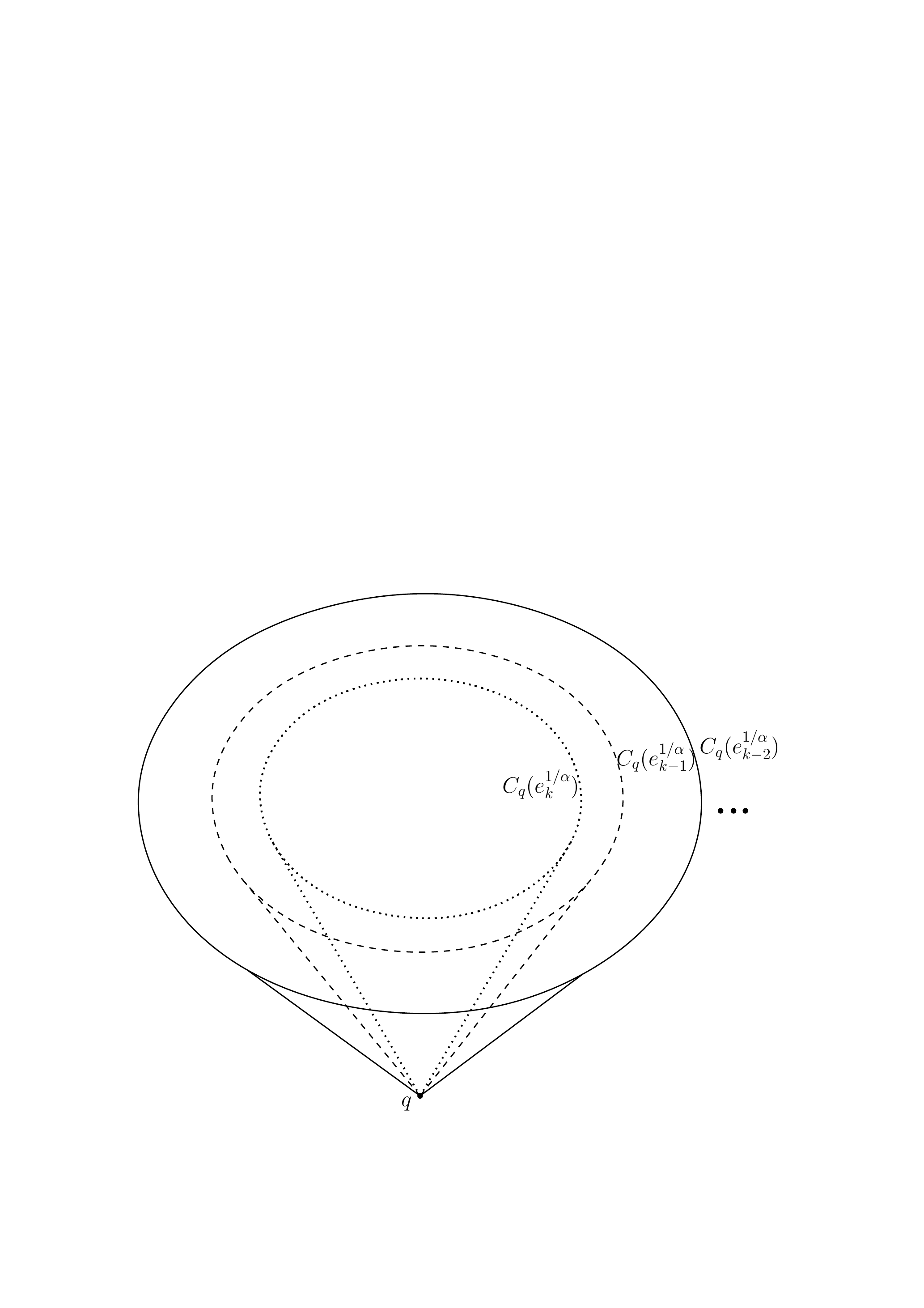}.
    \caption{Partitioning $D_q(e_0^{1/\alpha},e_k^{1/\alpha})$ into sub-shells.}
    \label{shell_partition}
\end{figure}

We identify the plane containing the transmitters and receivers with the $xy$-plane in $\mathbb{R}^3$.
Let $C_q(\rho)$ denote (the surface of) the vertical cone with apex $q$ whose $z$-coordinate at $t=(t_x,t_y)$ is $\rho|qt|$, where $\rho > 0$ is a constant. Let $D_q(\rho_1,\rho_2)$, $\rho_2 > \rho_1 > 0$,  be the set of all points in 3-space lying above (i.e., in the interior of) the cone $C_q(\rho_1)$ and below or on (i.e., not in the interior of) the cone $C_q(\rho_2)$. Informally, $D_q(\rho_1,\rho_2)$ is the region between $C_q(\rho_1)$ and $C_q(\rho_2)$; we call it a \emph{(conical) shell}.

Recall that $e_0=\frac{\eps}{2n}\nrg(s_1,q)$. Let $e_i = (1 + \frac{\eps}{2})e_{i-1}$, for $i=1,\ldots,k-1$, where $k-1$ is the largest integer for which $e_i < \nrg(s_1,q)$, and set $e_k = \nrg(s_1,q)$.
We partition the range $I=(e_0,e_k]$ of signal strengths at $q$ into $k$ sub-ranges, $I_1=(e_0,e_1], I_2=(e_1,e_2],\ldots,I_k=(e_{k-1},e_k]$, and count, for each sub-range $I_j$, the number of transmitters whose signal strength at $q$ lies in $I_j$.

Consider a sub-range $I_j = (e_{j-1},e_j]$; we want to count the number transmitters whose signal strength at $q$ lies in $I_j$.  This occurs whenever
$e_{j-1}^{1/\alpha}|qs_i| < p_i^{1/\alpha} \le e_j^{1/\alpha}|qs_i|$,
or whenever the point $(s_i, p_i^{1/\alpha})$ in $\mathbb{R}^3$ lies in the shell $D_q(e_{j-1}^{1/\alpha},e_j^{1/\alpha})$. Thus, we have reduced the problem to the difference of two conical range-counting queries.

We raise each of the transmitters $s_i \in S \setminus \{s\}$ to height $p_i^{1/\alpha}$, and preprocess the resulting set of points for conical range counting queries. If the number of points in the shell corresponding to $I_j$ is $x_j$, then we add the term $e_j x_j$ to our approximation of $\intrf(q)$, that is, we approximate the contribution of each transmitter $s_i$ whose corresponding point lies in the shell by $e_j$. (This corresponds to  vertically projecting the point $(s_i,p_i^{1/\alpha})$ onto the cone $C_q(e_j^{1/\alpha})$.)
We prove below that this yields a $(1 + \frac{\eps}{2})$-approximation of the overall contribution of the transmitters in $S_c$ to $\intrf(q)$.

\begin{lemma}
\label{lem:shell_moving_s_i}
Let $s_i \in S_c$ and let $D=D_q(e_{j-1}^{1/\alpha},e_j^{1/\alpha})$ be the shell containing $s_i$.
Then, by replacing $s_i$'s contribution by $e_j$, one obtains a $(1+\frac{\eps}{2})$-approximation of the contribution of~$s_i$ to $\intrf(q)$.
\end {lemma}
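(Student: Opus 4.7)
The plan is to unwind the geometric definition of the shell to extract the key numeric constraint on $\nrg(s_i,q)$, then compare the replacement value $e_j$ against this constraint.

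First I would observe that the point $(s_i, p_i^{1/\alpha}) \in \RR^3$ lies in the shell $D_q(e_{j-1}^{1/\alpha}, e_j^{1/\alpha})$ precisely when
\[
  e_{j-1}^{1/\alpha}\,|qs_i| \;<\; p_i^{1/\alpha} \;\le\; e_j^{1/\alpha}\,|qs_i|,
\]
which is immediate from the definition of $C_q(\rho)$ as the cone of height $\rho|qt|$ above $t$. Raising both sides to the $\alpha$-th power and dividing by $|qs_i|^\alpha$, this is equivalent to
\[
  e_{j-1} \;<\; \frac{p_i}{|qs_i|^\alpha} \;=\; \nrg(s_i,q) \;\le\; e_j.
\]
So the shell containment translates directly into a two-sided bound on the true contribution of $s_i$.

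Next I would use the geometric progression $e_j = (1 + \eps/2)\,e_{j-1}$ built into the construction. From $\nrg(s_i,q) \le e_j$ we get $e_j \ge \nrg(s_i,q)$, so the replacement is an overestimate. From $\nrg(s_i,q) > e_{j-1}$ we get
\[
  \frac{e_j}{\nrg(s_i,q)} \;<\; \frac{e_j}{e_{j-1}} \;=\; 1 + \frac{\eps}{2},
\]
and therefore $e_j < (1 + \eps/2)\,\nrg(s_i,q)$. Combining, $\nrg(s_i,q) \le e_j < (1+\eps/2)\,\nrg(s_i,q)$, which is exactly the desired $(1+\eps/2)$-approximation.

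There is no real obstacle here: the only subtle point is to verify that the geometric containment $e_{j-1}^{1/\alpha}|qs_i| < p_i^{1/\alpha} \le e_j^{1/\alpha}|qs_i|$ correctly reflects the semi-open nature of the shell (strict on the inner cone, closed on the outer cone), which matches the semi-open intervals $I_j = (e_{j-1}, e_j]$ used in the partition; everything then follows from the definition of the progression $\{e_j\}$.
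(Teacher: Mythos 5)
Your proof is correct and follows essentially the same argument as the paper: translate the shell containment into the two-sided bound $e_{j-1} < \nrg(s_i,q) \le e_j$, then use the ratio $e_j/e_{j-1}$ to bound the overestimate. One tiny caveat: you write $e_j/e_{j-1} = 1+\eps/2$ with equality, but for the outermost shell $j=k$ the paper sets $e_k = \nrg(s_1,q)$, which can make the ratio strictly smaller; the paper's $e_j/e_{j-1} \le 1+\eps/2$ is the safe form, though your conclusion is unaffected.
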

\begin{proof}
Since $s_i \in D$,
$e_{j-1}^{1/\alpha}|qs_i| < p_i^{1/\alpha} \le e_j^{1/\alpha}|qs_i|$. Moreover, by construction, $e_j/e_{j-1} \le 1+\frac{\eps}{2}$. So, the ratio between the calculated contribution of $s_i$ and its real contribution is
$\frac{e_j}{\nrg(s_i,q)} < \frac{e_j}{e_{j-1}} \le 1+\frac{\eps}{2}$.
\end{proof}
\begin{corollary}
\label{cor:approx_S_c_contribution2}
By doing this for each transmitter in $S_c$, one obtains a $(1 + \frac{\eps}{2})$-approximation of the overall contribution of the transmitters in $S_c$ to $\intrf(q)$.
\end{corollary}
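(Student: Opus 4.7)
The corollary is an immediate consequence of \cref{lem:shell_moving_s_i} applied transmitter-by-transmitter and then summed. My plan is as follows.

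First I would unpack what \cref{lem:shell_moving_s_i} actually gives. For each $s_i \in S_c$, let $j(i)$ be the unique index with $s_i$ in the shell $D_q(e_{j(i)-1}^{1/\alpha}, e_{j(i)}^{1/\alpha})$; such an index exists because $S_c$ is defined as the set of transmitters whose signal strength at $q$ lies in the range $(e_0, e_k]$, which is partitioned by the $I_j$. Membership in this shell is equivalent to $\nrg(s_i,q) \in (e_{j(i)-1}, e_{j(i)}]$, so in particular $\nrg(s_i,q) \le e_{j(i)}$, while \cref{lem:shell_moving_s_i} bounds the ratio above by $1+\eps/2$. Together,
\[
  \nrg(s_i,q) \;\le\; e_{j(i)} \;\le\; \left(1+\frac{\eps}{2}\right)\nrg(s_i,q).
\]

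Next I would sum these per-transmitter inequalities over $s_i \in S_c$. The approximate contribution computed by the algorithm is $\sum_{s_i \in S_c} e_{j(i)}$ (which, grouped by shell index, equals $\sum_j e_j x_j$, matching the algorithm's bookkeeping). The true contribution is $\sum_{s_i \in S_c}\nrg(s_i,q)$. The per-transmitter two-sided bound above, summed over $S_c$, yields
\[
  \sum_{s_i \in S_c}\nrg(s_i,q) \;\le\; \sum_{s_i \in S_c} e_{j(i)} \;\le\; \left(1+\frac{\eps}{2}\right)\sum_{s_i \in S_c}\nrg(s_i,q),
\]
which is exactly the claimed $(1+\eps/2)$-approximation.

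There is really no obstacle here, as the corollary is essentially a linearity-of-summation wrap-up of the preceding lemma; the only point to verify carefully is that the lemma supplies both sides of the approximation (the lower bound $\nrg(s_i,q)\le e_{j(i)}$ comes from the shell definition, and the upper bound from the ratio $e_{j(i)}/e_{j(i)-1}\le 1+\eps/2$ argued in the lemma proof). Given that, the statement follows in one line.
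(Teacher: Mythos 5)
Your proof is correct and takes the same route the paper intends: the corollary is left without a separate proof precisely because, as you write, it is the per-transmitter bound $\nrg(s_i,q)\le e_{j(i)}\le(1+\frac{\eps}{2})\nrg(s_i,q)$ from \cref{lem:shell_moving_s_i} (together with the shell membership), summed over $S_c$ by linearity. Your additional remark that the algorithm's quantity $\sum_j e_j x_j$ equals $\sum_{s_i\in S_c} e_{j(i)}$ is the right bookkeeping check and matches the paper's implicit reasoning.
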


It remains to show that $\tintrf(q)$, which is the sum of the approximations for $S_f$ and for $S_c$, satisfies the requirements, i.e., that $\intrf(q) \le \tintrf(q) \le (1 + \eps)\intrf(q)$.
From the description above it is clear that $\tintrf(q) \ge \intrf(q)$, so we only need to establish the upper bound. Indeed, using \cref{obs:intrf_non-uni}, we conclude that
\begin{align*}
  \tintrf(q)
  & \le \frac{\eps}{2|qs_1|^\alpha} + (1 + \frac{\eps}{2}) \sum_{s_i \in S_C} \nrg(s_i,q) \\
  & \le \frac{\eps}{2}\intrf(q) + (1 + \frac{\eps}{2})\intrf(q) = (1 + \eps)\intrf(q)\,.
\end{align*}

\subparagraph*{Implementation}
A straightforward calculation shows (analogously to \cref{lem:number_annuli}) that
$k$, the number of shells into which  $D_q(e_0^{1/\alpha},e_k^{1/\alpha})$ is partitioned, is
$O(\frac{1}{\eps}\log n)$.

We preprocess the set of raised transmitters for cone range reporting/counting queries. Then, given a query point $q$, we find $s$ and $s_1$ as follows.
Pick a random sample $T$ of $\sqrt{n}\log n$ transmitters and let $t_1 \in T$ be the transmitter whose signal strength at $q$ is the strongest. With high probability, the number of transmitters in $S$ that are closer to $q$ than $t_1$ in terms of signal strength at $q$ is $O(\sqrt{n})$, and we perform a range reporting query with the cone $C_1$ corresponding to $t_1$ in order to find them. The closest and second-closest points among the reported points are clearly $s$ and $s_1$.

As for shell range counting queries, for each such query we issue two cone range counting queries --- with the outer cone and the inner cone --- and return the difference of the answers.

We omit the time and space analysis of this version, since we describe below a more efficient variant, in which cones are replaced by pyramids.

\subsection{Pyramidal shells}
\label{sec:pyramidal_shells}

We now replace the conical shells by pyramidal ones to obtain an improved solution. Set $x=(1+\frac{\eps}{2})^\frac{1}{\alpha}$, and consider any three cones $C_q(\rho_0)$, $C_q(\rho_1)$ and $C_q(\rho_2)$ with apex at $q$, such that $\rho_0/\rho_1 = \rho_1/\rho_2 = \sqrt x$.
Let $P_q(\rho_i)$ be a regular $l$-pyramid inscribed in $C_q(\rho_i)$, where $l = \bigl\lceil\frac{\pi}{\sqrt{2- \frac{2}{\sqrt{x}}}}\bigr\rceil$.  That is, $P_q(\rho_i)$'s apex is at $q$, its edges emanating from $q$ are contained in (the surface of) $C_q(\rho_i)$, and the cross section of $P_q(\rho_i)$ and $C_q(\rho_i)$, using any horizontal cutting plane above $q$, is a regular $l$-gon and its circumcircle, respectively. The \emph{pyramidal shell} defined by $P_q(\rho_2)$ and $P_q(\rho_1)$ and denoted $\ps_q(\rho_2,\rho_1)$ is the semi-open region consisting of all points in the interior of $P_q(\rho_2)$ but not in the interior of $P_q(\rho_1)$. From \cref{cl:poly_is_included} and the observation above, it follows that $C_q(\rho_{i-1})$ is contained in $P_q(\rho_i)$, for $i=1,2$, and therefore, $\ps_q(\rho_2,\rho_1)$ is contained in
$D_q(\rho_2,\rho_0)$.

\subparagraph*{Query algorithm}
We highlight the differences with the conical-shell based approach.
First, we find $s$ and $s_1$, the closest and the second-closest transmitters to $q$, respectively, as described in detail below.
Previously, the transmitters were divided into two subsets lying close to $q$ and lying far from it, with the threshold $e_0 =\frac{\eps}{2n} \cdot \nrg(s_1,q)$. Here, we set $e'_0 = \frac{\nrg(s_1,q)}{x^{m/2}}$, where $m$ is the smallest integer for which $\frac{\nrg(s_1,q)}{x^{m/2}} \le \frac{e_0}{\sqrt x}$, and consider a transmitter close to $q$ whenever it lies in the interior of the pyramid $P_q({e'_0}^\frac{1}{\alpha})$, i.e., the pyramid inscribed in $C_q({e'_0}^\frac{1}{\alpha})$. The contribution of a single transmitter $s_i \in S_f$ to the sum $\intrf(q)$ is
$\nrg(s_i,q) \le \sqrt{x}e'_0 \le e_0  = \frac{\eps}{2n} \cdot \nrg(s_1,q)$,
for a total of at most $\frac{\eps}{2} \cdot \nrg(s_1,q)$, as before.

Consider the conical shell $D_q({e'_0}^\frac{1}{\alpha},\nrg(s_1,q)^\frac{1}{\alpha})$ and partition it into $m$ conical shells, such that the ratio between the parameters of the inner and outer cone of a shell is $\sqrt{x}$.
For each of the $m+1$ cones defining these conical shells, draw its inscribed regular $l$-pyramid. Let $P_q(\rho_1),...,P_q(\rho_{m+1})$ be the resulting sequence of pyramids, where $P_q(\rho_1)$ is the innermost one, and consider the corresponding sequence of $m$~nested pyramidal shells. Notice that $m=O(k)$, where $k$ is the number of cones in the conical shells version,
so once again
$m = O(\frac{1}{\eps}\log n)$. Moreover, each of the pyramidal shells, except for $\ps_q(\rho_2,\rho_1)$, is contained in the union of two consecutive conical shells, which is a conical shell of ratio $x$. For $\ps_q(\rho_2,\rho_1)$, we observe that $S_c \cap \ps_q(\rho_2,\rho_1)$ is contained in the innermost conical shell.

We assign a transmitter $s_i$ to a shell $\ps_q(\rho_{j+1},\rho_j)$ if $s_i$ lands in the shell after being raised to height $p_i^{1/\alpha}$.
Now, for each shell $\ps_q(\rho_{j+1},\rho_j)$ and each $s_i \in S_c$ assigned to it, we estimate the contribution of $s_i$ from above by $1/\rho_{j-1}^\alpha$, i.e., by projecting $s_i$ onto the inner cone of
the conical shell %
containing $\ps_q(\rho_{j+1},\rho_j)$. By \cref{lem:shell_moving_s_i} and the subsequent corollary, we obtain a $(1 + \frac{\eps}{2})$-approximation of the overall contribution of the transmitters in $S_c$ to $\intrf(q)$.  Adding our previous estimate for those in $S_f$ yields the promised $\tintrf(q)$.

\subparagraph*{Implementation}

Observe that each regular $l$-pyramid $P_q(\rho_j)$ is the union of $l$ 3-sided wedges, where the $i$th wedge is defined by two planes of fixed orientation (perpendicular to the $xy$-plane) and a third plane containing the $i$th face of the pyramid.

In the preprocessing stage we construct $l$ data structures over the set $S$, one for each family of wedges, supporting dynamic 3-dimensional 3-sided wedge range counting queries (a restricted form of simplex range counting in three dimensions).
Each data structure handles wedges of the same ``type''; the orientations of the two vertical bounding planes are fixed, while the orientation of the third plane varies (but remains perpendicular to the vertical plane bisecting the first two).
The data structure for the $i$th family is a three-level search structure, where the first two levels allow us to represent the points of $S$ that
lie in the 2-sided wedge formed by the two vertical planes delimiting our 3-sided wedge, as a small collection of canonical subsets. For each canonical subset of the second level of the structure, we raise each of its points $s_i$ to height $p_i^{1/\alpha}$ and then project it onto a vertical plane which is parallel to the bisector of the two vertical wedge boundaries. Finally, we construct for the resulting set of points a data structure for two-dimensional halfplane range counting queries.
We will also need the corresponding reporting structure, see below.

Using standard tools for dynamic multilevel structures and, for example, Matou\v{s}ek's data structure for halfplane range counting at the bottom level, we obtain a structure of $O(n\polylog n)$ size that supports wedge counting (and reporting) queries in $O(n^{1/2}\polylog n)$ time and updates in $O(\polylog n)$ amortized time.

Now, given a pyramidal shell $\ps_q(\rho_{j+1},\rho_j)$, we can count the number of raised points that lie in it as follows. We first perform $l$ queries for the pyramid $P_q(\rho_{j+1})$, one in each of the $l$ data structures, to obtain the total number of points that lie in it.  We repeat the process for the pyramid $P_q(\rho_j)$ and finally subtract the latter number from the former one.

Below
we describe how to find $s$ and $s_1$, the closest and second-closest points to~$q$, in randomized $O(\sqrt{n}\log n)$ time with high probability plus $l$ wedge reporting queries. %
Once again, an update is performed by modifying the underlying data structures.
We summarize the main result of this section.

\begin{theorem}
\label{th:non-uni-dyn}
	One can preprocess $n$ arbitrary-power transmitters, in time and space $O(\frac{n}{\sqrt{\eps}} \polylog n)$, into a dynamic data structure for approximate SINR queries. A query is handled in $O(\frac{\sqrt n}{\eps^{3/2}}\polylog n)$ time, after a preliminary stage which is performed in randomized time $O(\frac{\sqrt n}{\sqrt{\eps}}\polylog n)$ with high probability, while an update is performed in amortized time $O(\frac{1}{\sqrt{\eps}}\polylog n)$.
\end{theorem}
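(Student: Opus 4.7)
The plan is to combine the approximation scheme from \cref{sec:pyramidal_shells} with the dynamic wedge-counting infrastructure sketched in the Implementation paragraph, and to verify that all three cost bounds follow by multiplying the right ingredients.

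First I would set up the data structures. For each of the $l = O(1/\sqrt{\eps})$ wedge orientations I build the three-level structure described in the excerpt: the two outer levels are a standard multilevel partition that isolates, for a query wedge, the points lying in the 2-sided vertical wedge as $O(\polylog n)$ canonical subsets; at each second-level canonical subset the relevant points are raised to height $p_i^{1/\alpha}$ and projected onto the bisecting vertical plane, and Matoušek's dynamic halfplane range-counting (and a parallel range-reporting copy) is built on the resulting planar point set. Using the standard partial-rebuilding machinery at each of the three levels, a single structure occupies $O(n\polylog n)$ space, is built in $O(n\polylog n)$ time, answers a wedge counting/reporting query in $O(\sqrt{n}\polylog n)$ time, and supports insertions and deletions in $O(\polylog n)$ amortized time. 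Summing over the $l=O(1/\sqrt{\eps})$ orientations yields the claimed $O(n\polylog n/\sqrt{\eps})$ preprocessing/space bound and $O(\polylog n /\sqrt{\eps})$ amortized update bound.

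Next I would analyze a query. Given $q$, the preliminary stage finds the closest and second-closest transmitters to $q$ in the $\dist$-sense. Following the excerpt's recipe, I draw a random sample $T\subset S$ of size $\Theta(\sqrt{n}\log n)$, compute by brute force the transmitter $t_1\in T$ with the strongest signal at $q$, and then issue a cone reporting query with the cone $C_q(\nrg(t_1,q)^{1/\alpha})$. A standard sampling argument shows that with high probability only $O(\sqrt{n})$ transmitters of $S$ have a stronger signal than $t_1$, so the reporting query returns at most that many points; the cone query is implemented exactly like a pyramid query (sandwiching between two pyramids by \cref{cl:poly_is_included}), costing $l$ wedge reporting operations, i.e.\ $O(\sqrt{n}\polylog n/\sqrt{\eps})$ time. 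Selecting $s$ and $s_1$ from the reported list is immediate, giving the preliminary bound $O(\sqrt{n}\polylog n/\sqrt{\eps})$ whp.

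Once $s_1$ is known, I construct the $m=O(\frac{1}{\eps}\log n)$ pyramidal shells of \cref{sec:pyramidal_shells}. For each shell I perform two pyramid counts, each requiring $l=O(1/\sqrt{\eps})$ wedge counting queries, and combine them according to the approximation rule of \cref{lem:shell_moving_s_i}; adding the $S_f$-term $\frac{\eps}{2}\nrg(s_1,q)$ then yields $\tintrf(q)$, and dividing $\nrg(s,q)$ by it gives $\tsinr(q)$ with the promised accuracy by \cref{cor:approx_S_c_contribution2}. The total query time after the preliminary stage is
\[
O\!\left(m\cdot l\cdot \sqrt{n}\polylog n\right)=O\!\left(\tfrac{1}{\eps}\log n\cdot \tfrac{1}{\sqrt{\eps}}\cdot \sqrt{n}\polylog n\right)=O\!\left(\tfrac{\sqrt{n}}{\eps^{3/2}}\polylog n\right),
\]
matching the claim. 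The only subtle step is the high-probability guarantee for the sampling phase; the rest is bookkeeping, so the main obstacle is making the sampling argument for $s,s_1$ fully rigorous in the weighted $\dist$-metric and verifying that the cost of the wedge reporting query does not blow up when $t_1$ happens to lie unusually close to $q$ (handled by conditioning on the high-probability event that at most $O(\sqrt{n})$ points beat $t_1$).
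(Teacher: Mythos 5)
Your proposal follows the paper's approach: the $l = O(1/\sqrt{\eps})$ wedge families with a three-level multilevel structure and Matou\v{s}ek's halfplane counting at the bottom, the random sample of size $\Theta(\sqrt{n}\log n)$ to locate a near-best transmitter $t_1$, a range query to find $s$ and $s_1$, and then $m = O(\frac{1}{\eps}\log n)$ pyramidal-shell counts to assemble $\tintrf(q)$. The cost accounting also matches.

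There is, however, a genuine gap in the preliminary stage. You build the counting/reporting machinery for \emph{pyramids} (wedges), so the range query you actually issue is with the inscribed pyramid $P_1 \subset C_1$, not with the cone $C_1$ itself; ``sandwiching'' the cone between two pyramids does not make the cone directly queryable with this data structure. Consequently the lifted point $\hat{s}$ might lie between $\partial P_1$ and $C_1$, in which case the reporting query can return zero points or exactly one point, and your step ``selecting $s$ and $s_1$ from the reported list is immediate'' fails. The paper addresses precisely these two degenerate outcomes: if the pyramid is empty it proves that $\sinr(q)<1<\beta$ (so the answer is definitively \textsc{no}); if exactly one point is returned it uses $t_1$ as a $(1+\delta)$-approximation of $s_1$ and shows, in a separate modification of the query algorithm, that an approximate value $\tnrg_1$ with $\tnrg_1 \le \nrg(s_1,q) \le (1+\delta)\tnrg_1$ suffices to set $\tilde e_0$, build the shells, and still obtain a $(1+\eps)$-approximation. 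Your sketch does not contain this fallback, nor the observation that the rest of the algorithm can be run on an approximation to $\nrg(s_1,q)$ rather than its exact value, and the closing remark about $t_1$ lying ``unusually close to $q$'' does not identify this as the actual issue. To make the proof complete you would need to spell out (i) the argument that an empty $P_1$ forces $\sinr(q)<\beta$, and (ii) the modified shell construction using $\tnrg_1$ when only one point is reported.
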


\subparagraph*{Finding the closest and second-closest transmitters to $q$}
We have assumed that given a query point $q$, we can find the closest $s$ and second-closest $s_1$ transmitters to $q$, efficiently. This section deals with this initial stage.

We begin by observing that we do not really need to find $s_1$, provided that we can obtain a sufficiently good approximation of $\nrg_1 =  \nrg(s_1,q)$. Let $\tnrg_1 \in \mathbb{R}$ such that $\tnrg_1 \le \nrg_1 \le (1+\delta)\tnrg_1$, where $\delta > 0$ is a sufficiently small constant. Then, it is easy to modify our query algorithm so that it uses $\tnrg_1$ instead of $\nrg_1$. For simplicity of presentation, we refer in this paragraph to the algorithm using conical shells. Set $\tilde{e}_0 = \frac{\eps}{2n}\tnrg_1$. A transmitter will be considered close to $q$ if and only if its signal strength at $q$ is greater than~$\tilde{e}_0$. If $s_i \in S$ is far from $q$, then $\nrg(s_i,q) \le \tilde{e}_0 \le \frac{\eps}{2n}\nrg_1 = \frac{\eps}{2n}\nrg(s_1,q)$, so the overall contribution of the transmitters in $S_f$ is bounded by $\frac{\eps}{2}\nrg(s_1,q)$, as before.
Next, we partition the range $I = (\tilde{e}_0, (1+\delta)\tnrg_1]$ into $k = O(\frac{1}{\eps}\log n)$ sub-ranges, such that the ratio between the extreme values of a sub-range is at most $1 + \eps/2$ and proceed exactly as before.

We now describe how to find $s$. Our algorithm may or may not find $s_1$. However, if it does not find $s_1$, it returns a transmitter $t_1 \in S$ such that $\nrg(t_1, q) \le \nrg(s_1, q) \le (1+\delta)\nrg(t_1, q)$, where $\delta > 0$ is a sufficiently small constant, so we can set $\tnrg_1 = \nrg(t_1, q)$ and apply the above modified query algorithm.

Pick a random sample $T$ of $\sqrt{n}\log n$ transmitters and let $t_1 \in T$ be the transmitter whose signal strength at $q$ is the strongest. This can be done in $O(\sqrt{n}\log n)$ time.
With high probability the number of transmitters in $S$ that are closer to $q$ than $t_1$, in terms of signal strength at $q$, is $O(\sqrt{n})$.

We first lift each transmitter $s=(s_x, s_y) \in S$ to the point $\hat{s} = (s_x, s_y, p(s)^{1/\alpha})$.
Draw the cone $C_1$ corresponding to $t_1$, i.e., the cone whose $z$-coordinate above point $s$ is $\nrg(t_1,q)^{1/\alpha}|qs| = (p(t_1)^{1/\alpha}/|qt_1|)|qs|$.
Let $l$ be as above and consider the $l$-pyramid $P_1$ inscribed in $C_1$.
Let $C_0$ be the cone inscribed in $P_1$, so that $P_1$ lies between $C_0$ and $C_1$.
Notice that $C_0$ is the cone whose $z$-coordinate above point $s$ is $(1+\delta) \nrg(t_1,q)^{1/\alpha}|qs|$ (where we set $\delta = (1+\frac{\eps}{2})^{\frac{1}{2\alpha}} - 1)$.

Perform a range reporting query with $P_1$ (i.e., find all lifted points that lie in the interior of $P_1$ or on $P_1$). Since $P_1$ is inside $C_1$, with high probability the number of points in $P_1$ is $O(\sqrt{n})$. If the resulting set is non-empty, then in randomized $O(\sqrt{n})$ time with high probability we can find $s$ and also $s_1$ (provided the number of returned points is greater than~1).

Otherwise, if $P_1$ is empty, we claim that the answer to the SINR query must be \textsc{no}, i.e., $q$ cannot receive any transmitter. Indeed, in the best scenario $\hat{s}$ lies on $C_0$, where $s \in S$ is the closest transmitter to $q$, and the rest of the $O(\sqrt{n})$ transmitters, lifted to 3-space, lie on the cone $C_1$. But this will imply that $\sinr(q) < 1$.
Indeed $\nrg(s,q)=(1+\delta)^\alpha \nrg(t_1,q)$ and, for any other of the $\sqrt{n}$ transmitters $s'$, $\nrg(s',q)=\nrg(t_1,q)$, implying $sinr(q) < (1+\delta)^\alpha / \sqrt{n} \ll 1$.

If only one point lies in $P_1$, then we use $t_1$ as an approximation of $s_1$ as described above.

\section{Successive interference cancellation (SIC)}
\label{sec:sic}

Fix a receiver location $q$. 
SIC is a technique that enables $q$ to receive a specific transmitter~$t$, even when $\sinr(q,t) < \beta$.  More specifically, order the transmitters $s_1, \ldots, s_n$ in $S$ by increasing signal strength at $q$, assume $t=s_k$, and let $\sinr_i(q)$ denote the SIN ratio for the signal of $s_i$ at $q$, while ignoring transmitters $s_1,\dots,s_{i-1}$.  If $\sinr_1(q)=\sinr(q,s_1) \ge \beta$, $q$~can subtract $s_1$'s signal from the combined signal. If, in addition, $\sinr_2(q) \ge \beta$, $q$ can also subtract $s_2$ from the combined signal of the transmitters $s_2,\dots,s_n$, and so on.  If $\sinr_i(q)\geq \beta$, for $i=1,\dots,k$, we say that \emph{SIC succeeds for~$s_k$ at~$q$, in~$k$ rounds}.
We can simulate this process using our data structures for approximate SINR queries via a sequence of $k$ queries and $k-1$ deletions and insertions, and determine (approximately) whether SIC succeeds for~$s_k$ at~$q$. 
Observe that we need $t$ only to terminate the query, while Avin et al.~\cite{AvinCHKLPP17} need $t$ to identify the part of the data structure in which to initiate the search; in particular, we can generate all the transmitters accessible via SIC given a location $q$ in polylogarithmic time per transmitter, while they need to consult each of the $n$ parts of the data structure.
We obtain the following theorem.
\begin{theorem}
Assuming $t=s_k$, the 
simulation above can be performed 
in amortized time $O((1/\eps^{3/2})k\polylog n)$ in the uniform-power version. In the non-uniform version, it can be performed in
$O((1/{\eps^{3/2}})k\sqrt{n}\polylog n)$ time (see \cref{th:non-uni-dyn} for details).
\end{theorem}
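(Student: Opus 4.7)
The plan is to simulate the SIC process directly using the dynamic data structures from \cref{th:uniform_polygonal} (uniform) and \cref{th:non-uni-dyn} (non-uniform), interleaving approximate SINR queries with deletions. The key observation is that each approximate SINR query at $q$ returns both (i) the transmitter currently closest to $q$ in the signal-strength sense (i.e.\ $s_i$ when $s_1,\dots,s_{i-1}$ have been removed) and (ii) a value $\tsinr(q)$ whose comparison with $\beta$ resolves the SIC-success test up to the usual $\eps$-ambiguity.

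Concretely, I would proceed in rounds. At the start of round $i$, the current dynamic set is $S \setminus \{s_1,\dots,s_{i-1}\}$. Perform one approximate SINR query at $q$; this identifies $s_i$ and returns $\tsinr_i(q)$. If $\tsinr_i(q) < \beta$ one reports that SIC fails (up to the approximation tolerance); otherwise one deletes $s_i$ from the data structure and proceeds to round $i+1$. The process is halted as soon as the transmitter returned equals the target $t = s_k$: the decision whether SIC succeeds for $t$ at $q$ is then read off from the query in round $k$. Finally, to leave the data structure in its original state for subsequent SIC simulations (e.g., at other receiver locations), re-insert the $k-1$ deleted transmitters $s_1,\dots,s_{k-1}$.

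For the running time I would simply charge each round a query plus a delete, and an additional insert for the cleanup phase, then appeal to the per-operation bounds already established. In the uniform-power case, \cref{th:uniform_polygonal} gives query time $O((1/\eps^{3/2})\log^4 n)$ and amortized update time $O((1/\sqrt{\eps})\log^3 n)$; summing over $k$ queries and $2(k-1)$ updates yields $O((1/\eps^{3/2})k\polylog n)$. In the non-uniform case, \cref{th:non-uni-dyn} gives query time $O((\sqrt{n}/\eps^{3/2})\polylog n)$, which dominates the update cost, yielding $O((1/\eps^{3/2})k\sqrt{n}\polylog n)$ in total.

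I do not expect any genuine obstacle: the whole argument hinges on the fact that after deleting $s_1,\dots,s_{i-1}$ the next SINR query automatically identifies the $i$th strongest transmitter, so no additional bookkeeping is required. The only subtlety worth flagging is that the notion of ``SIC succeeds'' is inherited in approximate form from the approximate SINR queries: whenever some $\sinr_i(q)$ lies in the ambiguous band $[(1-\eps)\beta,(1+\eps)\beta)$ the algorithm may answer either way, which is consistent with the approximation guarantee established earlier for single-transmitter reception.
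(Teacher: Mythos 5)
Your proposal matches the paper's approach essentially verbatim: the paper states the simulation is ``a sequence of $k$ queries and $k-1$ deletions and insertions,'' and obtains the bound by charging each round one query and $O(1)$ updates at the per-operation costs from \cref{th:uniform_polygonal} and \cref{th:non-uni-dyn}, exactly as you do. The bookkeeping of re-inserting the deleted transmitters at the end, and the caveat about the $\eps$-ambiguous band, are both consistent with the paper's intended argument.
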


\begin{remark}
Consider, e.g., the uniform power version. Given $q$, $s_1$ can be canceled only if $\sinr_1(q) \ge \beta$, which implies that $|qs_2| > |qs_1|\beta^{1/\alpha}$. Now, $s_2$ can be canceled only if $s_1$ can be canceled and $\sinr_2(q) \ge \beta$, which implies that $|qs_3| > |qs_2|\beta^{1/\alpha} > |qs_1|\beta^{2/\alpha}$, etc. Therefore, in practice, it is unlikely that $s_k$ can be canceled for, say, $k > c\alpha \log_\beta n$, where $c \ge 1$ is some constant, since this would imply that $|qs_k| > |qs_1|\beta^{(c\alpha\log_\beta n)/\alpha} = |qs_1|n^c$. So, in practice the number of iterations needed to determine whether $t$ can be received or not using SIC will not exceed $c\log_\beta n$, and therefore the entire simulation can be performed in $O((1/\eps^{3/2})\polylog n)$ amortized time. A similar argument shows that in the non-uniform power version, the entire simulation can be performed in practice in $O((1/{\eps^{3/2}})\sqrt{n}\polylog n)$ time (see \cref{th:non-uni-dyn} for details), assuming the ratio of highest to lowest power is bounded by a polynomial in $n$.
\end{remark}

\section{Resolving SINR queries --- Back to the static setting}
\label{sec:ultimate}

In this section, we present a solution for the non-uniform power version in a static setting, which enables one to approximately answer an SINR query in $O(\polylog n)$ time, after near-linear time preprocessing.

We will use the following theorem of Har-Peled and Kumar~\cite[Theorem~2.16]{wann}:
\begin{theorem}[Approximate Two-Dimensional Multiplicatively Weighted Nearest Neighbor \cite{wann}]
  \label{fact:2d-ann-wtd}
  Given an $n$-point set $P \subset \RR^2$ with positive weights $w_p$ and a positive number $\eps$, one can preprocess it into a data structure of space $O(n\eps^{-6}\log^4 n)$ in time $O(n\eps^{-6}\log^7 n)$ to support $O(\log(n/\eps))$-time queries of the form: Given a point~$q$, return $p' \in P$ so that $|p'q|/w_{p'} \leq (1+\eps) |p^*q|/w_{p^*}$, where $p^*$ is the point in $P$ minimizing $|pq|/w_p$.
\end{theorem}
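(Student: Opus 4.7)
My plan is to sketch a proof along the lines of the standard approach to approximate Voronoi diagrams (AVDs) for multiplicatively weighted point sets. Since the statement is attributed to Har-Peled and Kumar and is simply invoked as a black box by the rest of the paper, I expect the ``proof'' in the paper to be nothing more than a citation; what follows is how I would independently reconstruct it.

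The first step is to reduce the dynamic range of the weights. By rescaling, assume $w_{\min}=1$. Points whose weights are so large that they dominate every query (i.e., those whose weighted Voronoi cell contains essentially everything relevant) can be peeled off and handled with a trivial structure, so it suffices to work with weights in a range of size $n^{O(1)}$. Next, bucket the remaining points by weight into $B = O((1/\eps)\log n)$ classes $P_1,\dots,P_B$, where the weights in a single class $P_i$ lie within a factor of $(1+\eps/c)$ of a class representative $\bar w_i$, for a sufficiently large constant $c$. Within a single class, the multiplicatively weighted distance $|pq|/w_p$ is approximated to within a $(1+\eps)$ factor by $|pq|/\bar w_i$, so minimizing over $P_i$ reduces to an unweighted approximate nearest neighbor problem for which standard planar AVD techniques give a compressed-quadtree-based structure of size $O(|P_i|\eps^{-2}\polylog n)$ and $O(\log(n/\eps))$ query time.

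To obtain the overall bound, I would merge the per-class AVDs into a single compressed-quadtree-based point-location structure over $\RR^2$. Geometrically, one can view each point $p\in P$ as a cone $C_p$ in $\RR^3$ with apex at $(p,0)$ and slope $1/w_p$, so that $|pq|/w_p$ is the height of $C_p$ over $q$; the problem is then to approximate the lower envelope of these cones from above. A standard WSPD-based argument together with a canonical-subset covering of space gives an AVD of the claimed size $O(n\eps^{-6}\polylog n)$, built in time $O(n\eps^{-6}\polylog n)$, in which each leaf stores an $(1+\eps)$-approximate weighted nearest neighbor valid for all query points in that leaf. Queries then reduce to a single point-location call, taking $O(\log(n/\eps))$ time.

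The main obstacle is the merging step: a naive union of the per-class AVDs would either inflate the query time by a factor of $B$ or blow up the space by repeatedly refining every class's decomposition to the same granularity. Obtaining the tight $\eps^{-6}$ dependence on approximation quality and a single-point-location query time requires carefully quantifying how the error tolerance of each weight class interacts with the geometric separation needed to certify approximate optimality globally; this is exactly where the delicate accounting in Har-Peled--Kumar's construction lies, and for the purposes of using the result in \cref{sec:ultimate} I would simply invoke it as stated rather than redo their analysis.
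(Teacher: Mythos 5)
You correctly identified that the paper offers no proof of this statement: it is Theorem~2.16 of Har-Peled and Kumar, quoted verbatim and cited as an external black box, which is exactly how you propose to use it. Your sketch of the underlying AVD/weight-bucketing/compressed-quadtree argument is a plausible outline of the cited work, but since the paper contains no proof to compare against, the only relevant judgment is that your treatment (invoke the citation, do not reprove it) matches the paper's.
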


Let $\eps > 0$. Let $q$ be a query point, let $s$ be the weighted-closest transmitter to $q$ and let $s_1$ be the second weighted-closest transmitter to $q$.

Let $\eps_1 = \eps_1(\eps,\alpha) > 0$, to be fixed below.
We construct the data structure of Har-Peled and Kumar for approximate weighted nearest neighbor queries in $S$ with error $\eps_1$, where the weight of transmitter $s \in S$ is $1/p^{1/\alpha}$, and store it in the root $r$ of a binary tree $T$. Next, we (arbitrarily) divide $S$ into two subsets $S_{\text{\sc l}}$ of size $\lfloor n/2 \rfloor$ and $S_{\text{\sc r}}$ of size $\lceil n/2 \rceil$, and construct the data structure of Har-Peled and Kumar for each of these subsets. We store the structures for $S_{\text{\sc l}}$ ad $S_{\text{\sc r}}$ in the left and right children of $r$, respectively. Finally, we apply this last step recursively to the left child and to the right child of $r$.

Let $s^q$ be the transmitter returned when performing a query with $q$ in the structure stored in the root $r$.
Let $s_1^q$ be the transmitter returned when performing $O(\log n)$ queries with $q$ in the structures stored in the nodes hanging from the path in $T$ from $s^q$ to the root $r$. (Among the $O(\log n)$ candidates we choose the one whose signal strength at $q$ is the maximum.)

If $\nrg(s^q,q) < \nrg(s_1^q,q)$, then interchange $s^q$ and $s_1^q$.

We know that 
$|sq|/p(s)^{1/\alpha} \le |s^qq|/p(s^q)^{1/\alpha} \le (1+\eps_1)|sq|/p(s)^{1/\alpha}$, or 
\[
  \frac{1}{(1+\eps_1)^\alpha}\nrg(s,q) \le \nrg(s^q,q) \le \nrg(s,q)\,,
\]
and that $|s_1q|/p(s_1)^{1/\alpha} \le |s_1^qq|/p(s_1^q)^{1/\alpha} \le (1+\eps_1)|s_1q|/p(s_1)^{1/\alpha}$, or
\[
  \frac{1}{(1+\eps_1)^\alpha}\nrg(s_1,q) \le \nrg(s_1^q,q) \le \nrg(s_1,q)\,.
\]
Set $\tnrg_1 = \nrg(s_1^q,q)$ and choose $\eps_1=\Theta(\eps/\alpha)$, so that $(1+\eps_1)^\alpha=1+\eps$, and therefore $\nrg(s,q) \le (1+\eps)\nrg(s^q,q)$ and $\nrg(s_1,q) \le (1+\eps)\nrg(s_1^q,q)$.
As before, set $x=(1+\frac{\eps}{2})^\frac{1}{\alpha}$.
Set $e_0 = \frac{\eps}{2n}\tnrg_1$ and set $e'_0 = \frac{\tnrg_1}{x^{m/2}}$, where $m$ is the smallest integer for which $\frac{\tnrg_1}{x^{m/2}} \le  \frac{e_0}{\sqrt x}$, and consider a transmitter close to $q$ whenever it lies in the interior of the pyramid $P_q({e'_0}^\frac{1}{\alpha})$, i.e., the pyramid inscribed in $C_q({e'_0}^\frac{1}{\alpha})$. This defines the sets $S_c$ and $S_f$ of transmitters close and far from $q$.

The contribution of a single transmitter $s_i \in S_f$ to the sum $\intrf(q)$ is
$\nrg(s_i,q) \le \sqrt{x} e'_0 \le e_0  = \frac{\eps}{2n} \cdot \tnrg_1$,
for a total of at most $\frac{\eps}{2} \cdot \tnrg_1 \le \frac{\eps}{2} \cdot \nrg(s_1,q)$.

Consider the conical shell $D_q({e'_0}^\frac{1}{\alpha},(1+\eps_1)\tnrg_1^\frac{1}{\alpha})$ and partition it into $m$ conical shells, such that the ratio between the parameters of the inner and outer cone of a shell is $\sqrt{x}$.
For each of the $m+1$ cones defining these conical shells, draw its inscribed regular $l$-pyramid. Let $P_q(\rho_1),...,P_q(\rho_{m+1})$ be the resulting sequence of pyramids, where $P_q(\rho_1)$ is the innermost one, and consider the corresponding sequence of $m$~nested pyramidal shells. Notice that $m=O(k)$, where $k$ is the number of cones in the conical shells version,
so once again
$m = O(\frac{1}{\eps}\log n)$. Moreover, each of the pyramidal shells, except for $\ps_q(\rho_2,\rho_1)$, is contained in the union of two consecutive conical shells, which is a conical shell of ratio $x$. For $\ps_q(\rho_2,\rho_1)$, $S_c \cap \ps_q(\rho_2,\rho_1)$ is contained in the innermost conical shell.

We proceed as in the dynamic setting. We construct $l$ three-level data structures. However, at the bottom level of each of these structures, we store data structures for 2-dimensional \emph{approximate} half-plane range counting (instead of exact counting).
(Refer to \cite{AfshaniC09} for a randomized data structure of expected size $n/\epsilon^{O(1)}$ constructed in expected time $(n/\epsilon^{O(1)})\log n$ for answering halfplane range-countng queries $(1+\epsilon)$-approximately in expected time $\epsilon^{-O(1)}\log n$, for any $\epsilon>0$.)

Each pyramid $P_q(\rho_i)$ is the union of $l$ pyramidal wedges.
Consider any one of the $l$ families of pyramidal wedges, $W_1,...,W_{m+1}$, where $W_1$ is the innermost wedge and $W_{m+1}$ is the outermost one. We need to count the number of points in each of the shells $W_i - W_{i-1}$, for $i = 2,\ldots,m+1$. Let $c_i$ be the number of points in wedge $W_i$ and let $c'_i$ be the number of points returned by the data structure when querying with $W_i$. Then $(1-\delta)c_i \le c'_i \le (1+\delta)c_i$, for some $\delta = \delta(\eps)$.

We would have liked to compute the sum $X = \sum_{i=2}^{m+1} e_{i-1}(c_i - c_{i-1})$, as the contribution of this family of shells to the approximated interference. However, this would take roughly $\sqrt{n}$ time. Instead, we observe that
\begin{align*}
X & = \sum_{i=2}^{m+1} e_{i-1}(c_i - c_{i-1}) = \sum_{i=2}^{m+1} e_{i-1}c_i - \sum_{i=2}^{m+1} e_{i-1}c_{i-1} \\
  & = \sum_{i=2}^{m+1} e_{i-1}c_i - \sum_{i=1}^m e_ic_i = e_mc_{m+1} + \sum_{i=2}^m (e_{i-1} - e_i) c_i - e_1c_1 \\
	& = e_mc_{m+1} + \sum_{i=2}^m (e_{i-1} - e_i) c_i \,,
\end{align*}
where the last equality follows from the fact that $c_1 = 0$. The above series of equations shows that, since $X$ is a linear combination of $c_2, \ldots, c_{m+1}$ with positive coefficients, replacing $c_2, \ldots, c_{m+1}$ by their relative approximations $c'_2, \ldots, c'_{m+1}$ will yield a relative approximation of $X$. We thus obtain a value $X' = e_mc'_{m+1} + (\sum_{i=2}^m (e_{i-1} - e_i) c'_i)$ that satisfies $(1 - \delta)X \le X' \le (1+\delta)X$.

Finally, set $\delta=\Theta(\eps)$ and adjust the approximation parameters by a constant factor so that the combined multiplicative error in both the numerator and denominator for the expression for $\sinr$ does not exceed $1+\eps$.  We thus conclude:

\begin{theorem}
  One can preprocess $n$ arbitrary-power transmitters, in expected $O(\frac{n}{\eps^{O(1)}} \polylog n)$ time and space, into a data structure that can answer approximate SINR queries in expected $O(\frac{1}{\eps^{O(1)}} \polylog n)$ time.
\end{theorem}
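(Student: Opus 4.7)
The plan is to verify that the construction described just above yields the claimed bounds by assembling two independent components: (i) the binary tree $T$ of approximate multiplicatively-weighted nearest-neighbor structures that locates $s^q$ and $s_1^q$, and (ii) the $l = O(1/\sqrt\eps)$ three-level pyramidal-wedge counting structures that accumulate the contribution of $S_c$ to $\intrf(q)$.

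For preprocessing, I would instantiate \cref{fact:2d-ann-wtd} at every node of $T$ with $\eps_1 = \Theta(\eps/\alpha)$. At a node holding $k$ points this costs $O(k \eps_1^{-6}\log^7 k)$ expected time and $O(k \eps_1^{-6}\log^4 k)$ space; since $T$ has $O(\log n)$ levels and the sizes on each level sum to $n$, the tree as a whole costs $O((n/\eps^{O(1)})\polylog n)$. For each of the $l$ wedge orientations I would build the standard three-level canonical-subset structure: the top two levels are range trees that isolate points lying between the two fixed vertical planes of the 3-sided wedge, and the bottom level stores the Afshani--Chan randomized approximate halfplane range-counting structure, of expected size $n/\eps^{O(1)}$, on the lifted and projected points. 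Standard multilevel overhead inflates each bound by a $\polylog n$ factor only, yielding total expected preprocessing time and space $O((n/\eps^{O(1)})\polylog n)$.

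For a query I would first call the tree to obtain $s^q$ and $s_1^q$ via $O(\log n)$ weighted NN queries of cost $O(\log(n/\eps))$ each; by the choice of $\eps_1$ these satisfy $\nrg(s,q) \le (1+\eps)\nrg(s^q,q)$ and similarly for $s_1$. I would then build the $m = O((1/\eps)\log n)$ pyramidal shells as in \cref{sec:pyramidal_shells} from $\tnrg_1 = \nrg(s_1^q,q)$, and for each wedge family and each shell boundary issue one pyramid counting query; each such query descends the three-level structure to $\polylog n$ Afshani--Chan calls, each answered in expected $\eps^{-O(1)}\log n$ time. The total query cost is $O(lm \cdot \eps^{-O(1)}\polylog n) = O(\eps^{-O(1)}\polylog n)$.

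The main obstacle, and the step I would handle most carefully, is the composition of multiplicative errors so that the returned $\tsinr$ lies within $(1\pm\eps)$ of $\sinr(q)$. Three sources compound: the $(1+\eps_1)$-error in $\nrg(s^q,q)$ and $\nrg(s_1^q,q)$ from \cref{fact:2d-ann-wtd}, the $(1+\eps/2)$-error from projecting each transmitter of $S_c$ onto the inner cone of its shell, and the $(1\pm\delta)$-error in each halfplane count. The telescoping identity $X = e_m c_{m+1} + \sum_{i=2}^{m}(e_{i-1}-e_i)c_i$ is what makes the last error safe: it writes $X$ as a positive linear combination of the $c_i$, so replacing each $c_i$ by a $(1\pm\delta)$-approximation $c_i'$ preserves the relative error. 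Choosing $\eps_1$, $\delta$, and the shell-ratio parameter each as a small constant-factor shrinkage of $\eps$ (all $\Theta(\eps/\alpha)$) keeps the compounded relative error on both $\nrg(s,q)$ and $\intrf(q)$ within the budget needed to certify a $(1+\eps)$-approximate $\sinr$, completing the proof.
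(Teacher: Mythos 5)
Your proposal follows the paper's proof essentially exactly: the binary tree of Har-Peled--Kumar approximate weighted-nearest-neighbor structures (with $\eps_1=\Theta(\eps/\alpha)$) to recover $s^q$ and $s_1^q$, the $l$ three-level wedge structures with Afshani--Chan approximate halfplane counting at the bottom level, and --- crucially --- the telescoping rewrite of $X$ as a positive linear combination of the cumulative counts $c_i$, which is exactly what lets $(1\pm\delta)$-approximate counts replace exact ones. The only small deviation (setting $\delta=\Theta(\eps/\alpha)$ rather than the paper's $\delta=\Theta(\eps)$) is harmless, since the paper in any case absorbs constant factors into the final parameter tuning.
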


\subparagraph*{Acknowledgments}
We wish to thank Pankaj K. Agarwal, Timothy Chan, Sariel Har-Peled, and Wolfgung Mulzer for discussions, hints, and outright help with some aspects of this paper.
\bibliographystyle{siamplain}
\bibliography{sinr}

\end{document}